\newtheorem{thm}{Theorem}[]
\newtheorem{cor}{Corollary}
\newtheorem{lem}{Lemma}
\theoremstyle{remark}
\newtheorem{rem}{Remark}
\theoremstyle{definition}
\newcommand{\SWITCH}[1]{\STATE \textbf{switch} #1}
\newcommand{\ENDSWITCH}{\STATE \textbf{end switch}}
\newcommand{\CASE}[1]{\STATE \textbf{case} #1\textbf{:} \begin{ALC@g}}
	\newcommand{\ENDCASE}{\end{ALC@g}}
\newcommand{\DEFAULT}{\STATE \textbf{default:} \begin{ALC@g}}
	\newcommand{\ENDDEFAULT}{\end{ALC@g}}
\newcommand{\DEFAULTLINE}[1]{\STATE \textbf{default:} }
\begin{document}

\title{
{ Signaling Design for MIMO-NOMA with   \\Different Security Requirements}
}

\author{Yue Qi and Mojtaba Vaezi
}

\author{Yue Qi and Mojtaba Vaezi 
	\thanks{This paper was presented in part at IEEE Global Communications Conference \cite{qi2020power}.
			The authors are with the Department
		of Electrical and Computer Engineering, Villanova University, Villanova,
		PA 19085 USA (e-mail: yqi@villanova.edu; mvaezi@villanova.edu).}
}

\maketitle

\begin{abstract}
Signaling design for secure transmission in two-user multiple-input multiple-output (MIMO) non-orthogonal multiple access (NOMA) networks is investigated in this paper. The base station  broadcasts multicast data to all users and also  integrates additional services, unicast data targeted to certain users, and confidential data protected against eavesdroppers. We categorize the above MIMO-NOMA with  different security requirements into several communication scenarios. The associated problem in each scenario is nonconvex. We propose a unified approach, called the power splitting scheme, for optimizing the rate equations corresponding to the scenarios.  
The proposed method decomposes the optimization of the secure MIMO-NOMA channel into a set of simpler problems, including multicast, point-to-point, and wiretap MIMO problems,  corresponding to the  three basic messages: multicast, private/unicast, and confidential messages. We then leverage existing solutions to design signaling (covariance matrix) for the above problems such that the messages are transmitted with high security and reliability. Numerical results illustrate the efficacy of the proposed covariance matrix (linear precoding and power allocation) design in secure MIMO-NOMA transmission. The proposed method also outperforms existing solutions, when applicable.

In the case of   no multicast messages, we also reformulate the nonconvex problem into weighted sum rate (WSR) maximization problems by applying the block successive maximization method and generalizing the zero duality gap.
The two methods have their advantages and limitations. 
Power splitting is a general tool that can be applied to the MIMO-NOMA with any combination of the three messages (multicast, private, and confidential) whereas WSR maximization shows greater potential for secure MIMO-NOMA communication without multicasting. In such cases, WSR maximization provides a slightly better rate than the power splitting method.
\end{abstract}

\begin{IEEEkeywords}
MIMO-NOMA, physical layer security, power splitting,  weighted sum rate, wiretap, multicast, unicast.
\end{IEEEkeywords}

\IEEEpeerreviewmaketitle

\section{Introduction}

%
  The unprecedented wave of emerging devices  has dramatically increased the requirements and challenges of resource allocation and spectrum utilization.  To fulfill 
  the demands, non-orthogonal multiple access (NOMA) at the physical (PHY) layer is a promising technique \cite{saito2013non,vaezi2019multiple}, and has attracted remarkable attention both in academia and industry.  
  For example, NOMA has also been proposed for inclusion in the
  Third Generation Partnership Project (3GPP) Long-Term Evolution
  Advanced (LTE-A) standard \cite{meredith2015study}. Therein, NOMA is referred to as multi-user superposition transmission (MUST) \cite{lee2016multiuser} built on LTE resource blocks. 


While several code-domain uplink NOMA schemes are developed in the literature \cite{vaezi2019multiple,cao2017resource}, downlink NOMA is based on well-known information-theoretic techniques for the broadcast channel (BC) \cite{weingartens2006capacity}. Then, in single-input single-output (SISO) NOMA networks, the superposition coding (SC) at the transmitter  and successive interference cancellation (SIC) at the receiver is the optimal strategy. 
Hence, a large body of work assumes NOMA to be equivalent to SC-SIC, 
and applies it to  multiple-input multiple-output  (MIMO) channels \cite{ding2015application, zeng2017sum, nguyen2017precoder, huang2018signal}.  
However, it is known that SC-SIC is not capacity-achieving in the MIMO-BC and \textit{dirty-paper coding} (DPC) is the only  optimal strategy \cite{weingarten2006capacity,vaezi2019noma,clerckx2021noma}. Similarly, in MIMO-NOMA with PHY layer security, SC-SIC cannot achieve secure capacity, and secret DPC (S-DPC) is the optimal solution \cite{liu2010multiple, Hung2013Liu}. 
In this paper, MIMO-NOMA is defined broadly and refers to any technique that allows simultaneous transmission over the same resources \cite{vaezi2019noma}, i.e., concurrent non-orthogonal transmission. That is, it is equivalent to the MIMO-BC.

\subsection{ MIMO-NOMA with Secrecy}

Today, there is a trend to
	merge multiple services in one transmission. This is referred to as \textit{PHY layer service integration} \cite{schaefer2014physical}. Integrated services  usually include three fundamental services: multicast, unicast, and confidential services,  which can be realized by common, private/individual, and confidential messages, respectively. Especially,  secure transmission of confidential messages requires PHY layer security which has been introduced as additional protection for secure transmission \cite{wyner1975wire}.
	
	This work is concerned with different security requirements for the two-user MIMO-NOMA networks, in which three  different types of messages can be transmitted:

\begin{itemize}
	\item \textit{Common message $M_0$ \cite{sidiropoulos2006transmit}:} a common message is transmitted in such a way that all users  can decode it. For example, the base station (BS) broadcasts daily news or amber alerts to all online users.
	\item \textit{Private message $M_p$ \cite{weingartens2006capacity}:
		a private or unicast/individual message is a message intended for a specific user}, since each user is only interested in its message. For instance, the BS provides targeted advertisements and recommended videos that are available only to  interested users. This message is not encoded securely, and as such, it can be decoded by other users.
	\item \textit{Confidential message $M_c$ \cite{khisti2010secure}:}  a confidential message is similar to a private message but is to be kept  secret from other users. For example, personal email accounts access and online banking transactions. Here, encoding is such that the message cannot be decoded by others. 
\end{itemize}




Early information-theoretic works\cite{weingarten2006capacity, ekrem2010gaussian,geng2014capacity,ly2008mimo,goldfeld2019mimo,Hung2013Liu} and have established the capacity region of  two-user MIMO-BC with different security requirements. 
This includes the MIMO-BC with one common and two independent private messages \cite{weingarten2006capacity}, \cite{ekrem2010gaussian,geng2014capacity}, the MIMO-BC with private, confidential, and common messages \cite{ly2008mimo, goldfeld2019mimo}, and the  
MIMO-BC with one common and two confidential messages \cite{ekrem2012capacity, Hung2013Liu}.
However, their primary purpose is
to derive capacity regions or to construct coding strategies that characterize certain rate regions. 
The solutions are based on DPC or S-DPC and usually are  given as a union over	all possible transmit covariance matrices satisfying certain	power constraints. Implementation of DPC requires sophisticated random coding  \cite{yoo2006optimality}, and finding practical dirty paper
codes close to the capacity is not easy \cite{lee2007high}. Linear precoding is a popular alternative to simplify the transmission design \cite{lee2007high, clerckx2021noma}.


The two-user MIMO-NOMA  can be classified  into three communication scenarios with different security requirements as shown in Fig~\ref{fig:scenarios}. The classification is mainly based on the well-established information-theoretic results:
	\begin{itemize}
		\item Scenario A (no security):  two independent private messages $M_{1p}$ and $M_{2p}$ (one for each user)  and a common message $M_{0}$ for both users are ordered \cite{weingarten2006capacity, ekrem2010gaussian,geng2014capacity}. In this case, we have a MIMO-NOMA with common and two private messages $(M_0, M_{1p}, M_{2p})$;  
		\item Scenario B (security for one user):   a confidential message $M_{1c}$ for user 1, a private message $M_{2p}$ for user 2, and one common message $M_{0}$ for both users are ordered \cite{goldfeld2019mimo}. Then, a MIMO-NOMA with common, private, and confidential messages $(M_0, M_{1c}, M_{2p})$ is formed;  
		\item Scenario C  (security for both users): two  confidential messages $M_{1c}$ and  $M_{2c}$ (one for each user), and a common message $M_{0}$ for both users are needed \cite{ekrem2012capacity, Hung2013Liu}. In this case, a MIMO-NOMA with common and confidential messages $(M_0, M_{1c}, M_{2c})$ is obtained.
	\end{itemize}  
	The three scenarios overall cover nine problems, or communication scenarios (see Table~\ref{tab_tasks}). The combinations of different types of messages are also named integrated services \cite{schaefer2014physical}.  
	

\begin{figure}[t]
	\centering
	\includegraphics[width=0.45\textwidth]{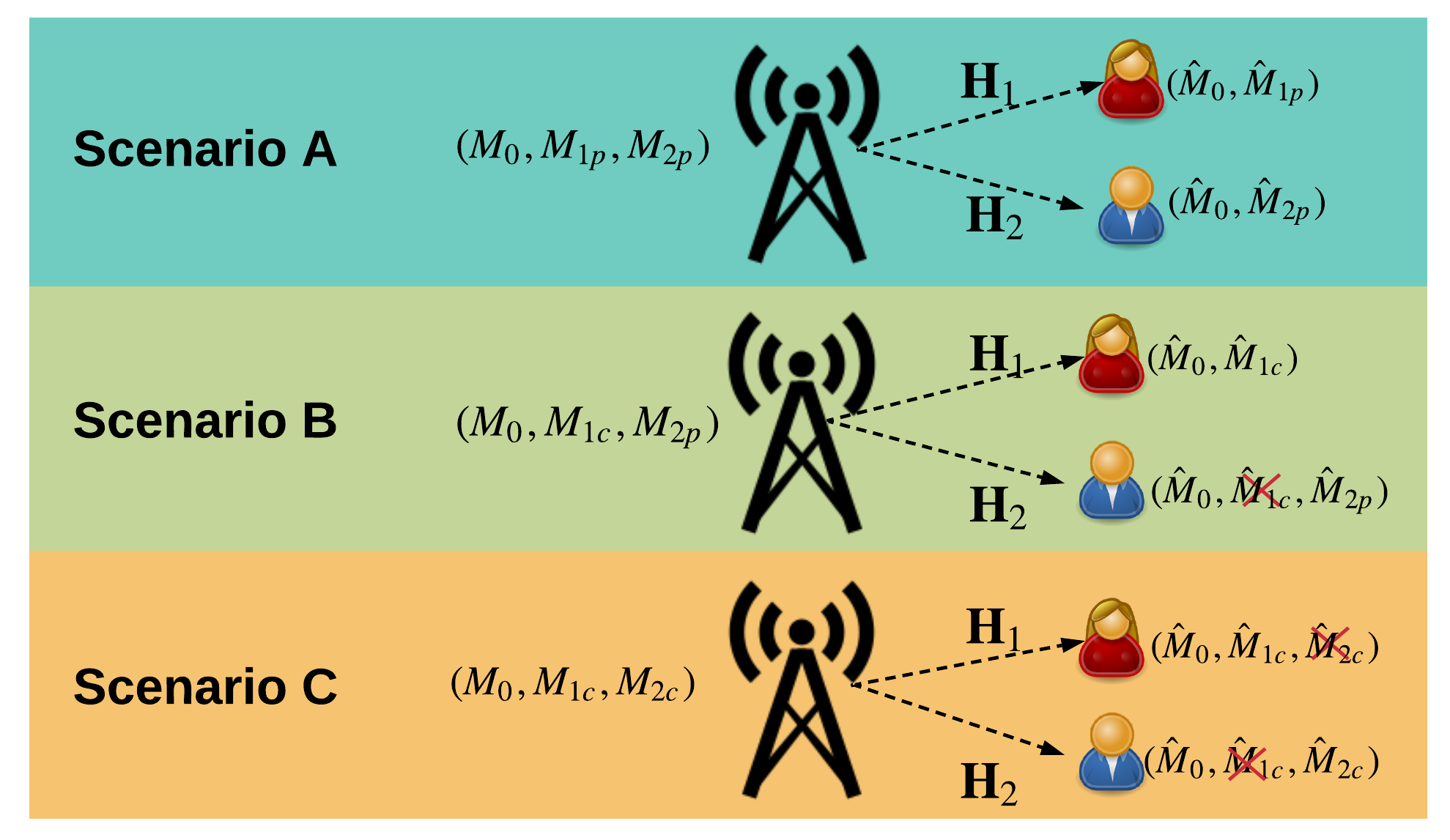}
	\caption{Communication scenarios with different combinations of security requirements based on the information-theoretic results. Consider three communication scenarios in which the base station sends different combinations of the three messages.
	}
	\label{fig:scenarios}
\end{figure}

\subsection{Motivation and Related Problems}

While the capacity regions of the three different messages are characterized,  it is still unknown how to identify optimal or implementation-efficient solutions.
Thus,
this paper is motivated by the following question:
How can we maximize the secrecy rate for the MIMO-NOMA with different security requirements in an acceptable computational complexity?

\begin{table*}[!tb]
	\caption{A summary of communication scenarios with different combinations of common, private, and confidential messages.}
	\label{tab_tasks}
	\centering
	\begin{tabular}{c |c c c c c c }
		\hline
		\rowcolor[HTML]{EFEFEF} 
		&  Communication scenarios                            &  $M_0$                    &  $M_1$                   &  $M_2$                    & Capacity region       &  Signaling schemes     \\ \hline
		\multirow{3}{*}{OMA}  & Multicasting  & Public & $-$ & $-$ & \cite{sidiropoulos2006transmit} & CAA \cite{zhu2012precoder}, closed-form \cite{qi2020power}                       \\ \cline{2-7} 
		& \cellcolor[HTML]{EFEFEF} P2P MIMO        &     \cellcolor[HTML]{EFEFEF} $-$   & \cellcolor[HTML]{EFEFEF}Private & \cellcolor[HTML]{EFEFEF}$-$  & \cellcolor[HTML]{EFEFEF}\cite{cover2012elements} & \cellcolor[HTML]{EFEFEF}SVD and WF \cite{cover2012elements}            \\ 
		& Wiretap  &   $-$  & Confidential & $-$ & \cite{khisti2010secure} & GSVD \cite{fakoorian2012optimal}, AOWF \cite{li2013transmit}, RM \cite{vaezi2017journal}   \\ \hline
		\multirow{7}{*}{NOMA} & \cellcolor[HTML]{EFEFEF}Two private    & \cellcolor[HTML]{EFEFEF}$-$   & \cellcolor[HTML]{EFEFEF}Private & \cellcolor[HTML]{EFEFEF}Private  & \cellcolor[HTML]{EFEFEF} \cite{vishwanath2003duality,weingarten2004capacity,weingartens2006capacity,viswanathan2003downlink} & \cellcolor[HTML]{EFEFEF}\color{black}GSVD \cite{chen2019asymptotic}, this work         \\ 
		& Private and confidential                            & $-$    & Confidential  & Private   & \cite{goldfeld2019mimo} & This work       \\ 
		& \cellcolor[HTML]{EFEFEF}Two confidential                                    &   \cellcolor[HTML]{EFEFEF}\color{black}$-$  & \cellcolor[HTML]{EFEFEF}\color{black}Confidential & \cellcolor[HTML]{EFEFEF}\color{black}Confidential& \cellcolor[HTML]{EFEFEF}\color{black}\cite{liu2010multiple} & \cellcolor[HTML]{EFEFEF} GSVD \cite{fakoorian2013optimality}, BSMM \cite{park2015weighted}, PS \cite{qi2020secure}         \\ 
		& Common and one confidential                        & Public &  \color{black}Confidential    & $-$   & \cite{HungD2010Liu} & GSVD \cite{weidong2016GSVD}, RM\cite{vaezi2019rotation}    \\ 
		& \cellcolor[HTML]{EFEFEF}Scenario A  &  \cellcolor[HTML]{EFEFEF}Public   &   \cellcolor[HTML]{EFEFEF}Private    & \cellcolor[HTML]{EFEFEF}Private   &   \cellcolor[HTML]{EFEFEF}\cite{weingarten2006capacity, ekrem2010gaussian,geng2014capacity} &  \cellcolor[HTML]{EFEFEF}This work \\ 
		& Scenario B  & Public    & Confidential  & Private   & \cite{ly2008mimo, goldfeld2019mimo} & This work \\ 
		& \cellcolor[HTML]{EFEFEF}Scenario C &  \cellcolor[HTML]{EFEFEF}Public  & \cellcolor[HTML]{EFEFEF}Confidential & \cellcolor[HTML]{EFEFEF}Confidential & \cellcolor[HTML]{EFEFEF}\cite{ekrem2012capacity, Hung2013Liu} & \cellcolor[HTML]{EFEFEF}This work  \\ \hline
	\end{tabular}
\end{table*}

The state-of-the-art includes solutions only for some special combinations of the messages and the orthogonal multiple access (OMA) case in which only one message, out of the three messages we mentioned earlier, is transmitted. These are summarized in Table~\ref{tab_tasks}, and some are highlighted below.

\begin{itemize}
	\item \textit{Two private messages \cite{vishwanath2003duality,weingarten2004capacity,weingartens2006capacity}:} When there is no common message in Scenario A, the problem reduces to the MIMO-BC and  
	DPC gives the capacity region. Alternatively, multiple access channel (MAC) to  BC  duality \cite{vishwanath2003duality} can be applied to iteratively achieve the capacity \cite{viswanathan2003downlink}. An analytic linear precoding scheme based on \textit{generalized singular value decomposition} (GSVD)--which  decomposes the MIMO channel into parallel SISO channels--is also designed for the special case where the two users are equipped with the same number of antennas \cite{chen2019asymptotic}.
	
	\item \textit{One confidential message \cite{khisti2010secure}:} When  there is neither a common nor private message in Scenario B, the system reduces to a MIMO wiretap 
	channel \cite{khisti2010secure}. Various linear precoding schemes, including  GSVD \cite{fakoorian2012optimal}, alternating optimization and water filling (AOWF) algorithm \cite{li2013transmit}, and rotation modeling (RM) \cite{vaezi2017journal} are known for this problem.
	
	\item \textit{Two confidential messages \cite{liu2010multiple}}: When $M_{0}$ is empty in Scenario C, the problem reduces to the  MIMO-BC with two confidential messages.  It is proven in \cite[Theorem 1]{liu2010multiple} that both users can reach their respective maximum	secrecy rates simultaneously by S-DPC under input covariance constraints.
	 Low-complexity approaches, such as  GSVD \cite{fakoorian2013optimality}, weighted sum rate (WSR) maximization with block successive maximization method (BSMM) \cite{park2015weighted}, and power splitting (PS) method \cite{qi2020secure} are developed. We generalize the PS into a more general case.
	
	\item \textit{Common and confidential messages \cite{HungD2010Liu}:} Different linear precoding schemes, including a GSVD-based precoding \cite{weidong2016GSVD} and RM with random search \cite{vaezi2019rotation} are proposed in this case.

	
	\item \textit{MIMO with only a common message \cite{sidiropoulos2006transmit, zhu2012precoder}:} If there is only a common message and no individual messages to be transmitted, the system becomes a multicast channel.   Heuristic precoding with iterations is investigated in \cite{zhu2012precoder}, analytical solutions with a convex tool  are in \cite{qi2020power}.
\end{itemize}
 However, these problems are all special cases of
	the three general scenarios mentioned earlier. Signaling designs (solutions) for the general cases are still unknown in general.  

\subsection{Contributions and Organization}
As illustrated, the problems listed in Table~\ref{tab_tasks} are all related to the three scenarios shown in Fig.~\ref{fig:scenarios}. Nonetheless, there are no solutions for the general cases. In this paper, we propose a new solution, named the power splitting method, which applies to all of those problems.  This method decomposes the secure MIMO-NOMA channel into point-to-point (P2P), wiretap, and multicasting MIMO channels. Then, we design one algorithm that can be used in all problems in Table~\ref{tab_tasks} to approach  their corresponding capacity regions.
The main contributions can be summarized as follows:
\begin{itemize}	
	\item We first split the total power among the three messages and then reformulate the secrecy capacity optimization problems into three sub-problems. Particularly, Scenario~A (only private massages) is decomposed into two P2P MIMO problems;  Scenario~B (private and confidential messages) is decomposed into one P2P MIMO problem and one wiretap channel; and, Scenario~C (only confidential messages) is decomposed into two wiretap channels.
	\item 	Linear precoder and power allocation matrices are designed for private and confidential messages by extending the analytical solution of the P2P MIMO problem and the numerical solution of wiretap channels to the MIMO-NOMA with different secrecy scenarios. For multicasting, we use a combination of analytical solutions and numerical solution based on a convex tool. Finally, we propose an algorithm for all  different secrecy scenarios.
	\item  When there is no common message, a WSR maximization  is formulated in all scenarios. We prove that  the WSR problem has zero duality gap in all scenarios, and the KKT conditions are necessary for the optimal solutions.  Besides, we derive and generalize an iterative algorithm for all  scenarios by applying the BSMM \cite{park2015weighted, razaviyaynunified}.  Especially, in Scenario~A, we provide an alternative solution that directly optimizes the4 WSR of the DPC region with BSMM instead of applying MAC-BC duality.
\end{itemize}

One main benefit of the proposed signaling design (power splitting, linear precoding, and power allocation) is its ability to be generalized to more complicated scenarios, e.g., when there are more than two users.

%

%
The remainder of this paper is organized as follows. In the
next section, we discuss the channel model and formulate the problems for the three  scenarios.
We introduce a power splitting method to all scenarios in Section~\ref{sec:IVA}, and a signaling design for each in Section~\ref{sec:IVB}. For the subcases without common messages, we generalize a WSR based on BSMM for all  scenarios in Section~\ref{sec:III}. 
We then present numerical results in  Section~\ref{sec:V} and conclude the paper in Section~\ref{sec:conclusion}.

\textit{Notations:} $\rm{tr}(\cdot)$ and $(\cdot)^{T}$ denote trace and transpose of matrices. $E\{\cdot\}$ denotes expectation. ${\rm{diag}}(\lambda_1, 
\dots, \lambda_n)$ represents diagonal matrix with elements  
$\lambda_1, 
\dots, \lambda_n$. $\mathbf{Q} \succcurlyeq \mathbf{0} $ represents that $\mathbf{Q}$ is a positive semidefinite matrix. $[x]^{+}$ gives the max value of $x$ and 0. $\mathbf{I}$ is the identity matrix.


\section{System Model} \label{Sec:II}

Considering a two-user MIMO-NOMA network, the BS equipped with $n_t$ antennas simultaneously serves two users, in which user~$1$ and user~$2$ are equipped with $n_1$ and  $n_2$ antennas, respectively. The transmitted signal to user~$1$ and user~$2$ share the same time slot and frequency.
\noindent The received signals at user~$1$ and user~$2$ are given by
\begin{subequations}\label{eq:signal model}
	\begin{align} 
	\mathbf{y}_1  = \mathbf{H}_1\mathbf{x} +\mathbf{w}_1,\\
	\mathbf{y}_2 = \mathbf{H}_2\mathbf{x} + \mathbf{w}_2,
	\end{align}
\end{subequations}
in which $\mathbf{H}_1 \in \mathbb{R}^{n_1 \times n_t}$ and $\mathbf{H}_2 
\in \mathbb{R}^{n_2 \times n_t}$ are the channel matrices for user~$1$ and 
user~$2$, respectively. $\mathbf{w}_1 \in \mathbb{R}^{n_1 \times 1}$ and 
$\mathbf{w}_2 \in \mathbb{R}^{n_2 \times 1}$ are independent identically 
distributed { (i.i.d.)} Gaussian noise vectors  whose elements are zero mean and 
unit variance.
The input $\mathbf{x} \in \mathbb{R}^{n_t \times 1}$ is a vector consisting of 
three components
\begin{align}
\mathbf{x} = \mathbf{x}_0 + \mathbf{x}_1 + \mathbf{x}_2,
\end{align}
where $\mathbf{x}_k \sim \mathcal{N}(\mathbf{0}, \mathbf{Q}_k)$, $k=0,1,2$,
is the input corresponding to two kinds of services: the multicast message $M_0$ and secure messages (private $M_p$ and/or confidential $M_c$)  of user~$1$ and 
user~$2$, in which $\mathbf{Q}_k\succcurlyeq \mathbf{0}$, is the 
covariance matrix corresponding to  $\mathbf{x}_k$. 
The channel input is subject to an average total power constraint 
	\begin{align} \label{eq:power cons}
	{\rm tr}(\mathbb{E}\{\mathbf{x}\mathbf{x}^{T}\}) = {\rm tr}(\mathbf{Q}_0+\mathbf{Q}_1+\mathbf{Q}_2) \leq P.
	\end{align}
We denote $R_{0}$, $R_{jp}$, and $R_{jc}$, $j = 1, 2$, as the achievable rates associated with the multicast, private, and confidential messages transmitted by the corresponding user $j$, respectively. 

%
	In the following, we provide the achievable rate region for each scenario.
	
\subsection{Scenario A (one common and two private messages)}
The DPC rate region of the MIMO-NOMA with common and two private messages is realized by \cite{weingarten2006capacity, ekrem2010gaussian},
\begin{align} \label{eq: level1rate}
	R_{{ A}}(P) = \rm{conv} \bigg\{\mathcal{R}^{\rm{DPC}}_{12} \cup \mathcal{R}^{\rm{DPC}}_{21} \bigg\}
\end{align}
in which $\rm{conv}$ is the convex hull operator. 
$\mathcal{R}^{\rm{DPC}}_{12}$   consists of all triplets $(R_{1p}, R_{2p},R_0)$ satisfying
\begin{subequations} \label{eq: mathmode1}
	\begin{align}
	&	R_0 \leq \min (R_{01}, R_{02}), \label{eq: mathmodel1_common} \\
	& R_{1p} \leq \frac{1}{2} \log|\mathbf{I} +\mathbf{H}_1 \mathbf{Q}_1 
	\mathbf{H}_1^T|, \label{eq: mathmodel1_confi1}  \\
	&	R_{2p} \leq  \frac{1}{2}\log|\mathbf{I} + ({\mathbf{I} +\mathbf{H}_2 \mathbf{Q}_1 
		\mathbf{H}_2^T})^{-1}\mathbf{H}_2 
		\mathbf{Q}_2 \mathbf{H}_2^T | \label{eq:  mathmodel1_confi2} 
	\end{align}
\end{subequations}
where
\begin{align} \label{eq:r0details}
 R_{0j} \triangleq \frac{1}{2} \log \bigg|\mathbf{I} +\frac{ \mathbf{H}_j 
	\mathbf{Q}_0 \mathbf{H}_j^T}{(\mathbf{I}+\mathbf{H}_j 
	(\mathbf{Q}_1+\mathbf{Q}_2) \mathbf{H}_j^T)}\bigg|, j=1, 2
\end{align}
with the total power constraint \eqref{eq:power cons}.
$\mathcal{R}^{\rm{DPC}}_{21}$ is obtained from $\mathcal{R}^{\rm{DPC}}_{12}$ by swapping the subscripts 1 and 2 corresponding to different DPC encoding orders. 
 Specifically,  when each of
	the users has a single antenna, the problem  can be transferred to a linear semi-definite convex optimization \cite[Section III]{weingarten2006capacity}, but the MIMO case is in general still unknown.
Without the common message, the capacity of the MIMO BC is given in \cite{weingarten2004capacity, vishwanath2003duality}.

\subsection{Scenario B (common, private, and confidential messages)}
In this scenario, only user 1 requires a confidential message. The secrecy capacity region $R_{{ B}}(P)$ under a total power constraint \eqref{eq:power cons} is given by a set of rate triples $(R_{1c}, R_{2p},R_0)$ satisfying \cite[Theorem 2]{goldfeld2019mimo}
\begin{subequations} \label{eq: mathmode2}
	\begin{align}
	&	R_0 \leq \min (R_{01}, R_{02}), \label{eq: mathmodel2_common} \\
	& R_{1c} \leq \frac{1}{2} \log|\mathbf{I} +\mathbf{H}_1 \mathbf{Q}_1 
	\mathbf{H}_1^T|
	-\frac{1}{2} \log|\mathbf{I} +\mathbf{H}_2 \mathbf{Q}_1 
	\mathbf{H}_2^T|,\label{eq: mathmodel2_confi1}  \\
	& R_{2p} \leq  \frac{1}{2}\log|\mathbf{I} + ({\mathbf{I} +\mathbf{H}_2 \mathbf{Q}_1 	\mathbf{H}_2^T})^{-1}\mathbf{H}_2 \mathbf{Q}_2 \mathbf{H}_2^T |.  \label{eq:  mathmodel2_confi2}
	\end{align}
\end{subequations}
The entire secrecy capacity region is achieved using DPC to cancel out the signal of private $M_{2p}$ at user 2, the other variant, i.e., DPC against $M_{1c}$ at user 1, is unnecessary. This is different from  Scenario~A for which the capacity region is exhausted by taking the convex hull of both variants ($\mathcal{R}^{\rm{DPC}}_{21}$ and $\mathcal{R}^{\rm{DPC}}_{12}$).

\subsection{Scenario C (one common and two confidential messages)}
The secrecy capacity region $R_{{ C}}(P)$ of the MIMO-BC with one common and two confidential messages under the average total power constraint \eqref{eq:power cons}
can be expressed as \cite[Theorem 2]{Hung2013Liu}
\begin{subequations} \label{eq: mathmodel3}
\begin{align}
	&R_0 \leq \min (R_{01}, R_{02}), \label{eq: mathmodel3_common} \\ 
&R_{1c} \leq  \frac{1}{2}\log|\mathbf{I} + { \mathbf{H}_1 \mathbf{Q}_1 \mathbf{H}_1^T}|-\frac{1}{2}\log|\mathbf{I} +  \mathbf{H}_2 \mathbf{Q}_1 \mathbf{H}_2^T|, \label{eq: mathmodel3_confi1} \\
&R_{2c} \leq   \frac{1}{2}\log|\mathbf{I} + ({\mathbf{I} +\mathbf{H}_2 \mathbf{Q}_1 
	\mathbf{H}_2^T})^{-1}\mathbf{H}_2 
\mathbf{Q}_2 \mathbf{H}_2^T |  \notag\\
& \quad \quad \quad -\frac{1}{2}\log |\mathbf{I} + ({\mathbf{I} +\mathbf{H}_1 \mathbf{Q}_1 \mathbf{H}_1^T})^{-1}{\mathbf{H}_1 \mathbf{Q}_2 \mathbf{H}_1^T}|,  \label{eq: mathmodel3_confi2}
\end{align}
\end{subequations}
The secrecy capacity region is characterized by S-DPC \cite{Hung2013Liu,ekrem2012capacity} \footnote{The S-DPC can assure security between the two users because a precoding matrix is selected such that it satisfies two goals \cite[Remark 5]{liu2010multiple}.  First,  it helps to cancel the precoding signal representing message $M_{2c}$, so that $M_{1c}$ can be served with an interference-free legitimate user channel. Second, it boosts the secrecy for  message $M_{2c}$ by causing interference (artificial noise) to user~1. In other words, user~1 can remove the interference of user~2 but is not able to decode the message of user~2.}. 
In this scenario, both of the users are legitimate and secret from each other. User 1 with confidential messages $M_{1c}$ treats user 2 as an eavesdropper, and vice versa.

The secrecy capacity regions in \eqref{eq: mathmode1}, \eqref{eq: mathmode2}, \eqref{eq: mathmodel3} are obtained via DPC or S-DPC which can be obtained by an exhaustive search over the set of 
all possible input covariance matrices. However, the complexity of such  
methods for practical implementations is prohibitive, which motivates us to develop a simpler 
 signaling scheme. The covariance matrices achieving the capacity regions are not known in general due to the non-convexity.

\section{Power Splitting Method for MIMO-NOMA  in All Scenarios}\label{sec:IV}
 In order to introduce a new simpler and faster solution, we split the total power for three messages in each scenario. Then, we decouple the MIMO-NOMA channel of all secrecy scenarios into three different problems and solve them separately. 
\subsection{Decomposing Secure MIMO-NOMA  into Simpler Channels} \label{sec:IVA}

We decompose MIMO-NOMA into different problems in this section. Due to some overlapping, such as the privacy part in  Scenario~A and Scenario~B, the confidentiality part in Scenario~B and  Scenario~C, we start with \textit{Step~1} to split the power between user~1 and user~2 for different usages; Then, \textit{Step~2a} and \textit{Step~2b} are for user~1 with private messages in  Scenario A and confidentiality in  Scenario~B,  respectively; \textit{Step~3a} and \textit{Step~3b} are for user~2 with private messages in  Scenario~A,  and confidentiality in  Scenario~C, correspondingly; Last, \textit{Step 4} is for common message in all  scenarios.

\subsubsection*{\textbf{Step 1}}
Introducing power splitting 
factor $\alpha_k \in [0,1]$, and $\sum_{k}\alpha_k=1$, $k=0,1,2$. We dedicate a fraction $\alpha_1$ of 
the total power to user~$1$ for the private message $M_{1p}$ or the confidential message $M_{1c}$ ($P_1= \alpha_1 P$), and fraction $\alpha_2 \in [0,\alpha_1]$ to  user~$2$ for the secure message $M_{2p}$ or $M_{2c}$ ($P_2= \alpha_2 P$). Last, we allocate the remaining power  to the common message $M_0$ for both users ($P_0=\alpha_0 P = (1-\alpha_1-\alpha_2)P$) \footnote{The optimal solution using total power throughout the paper.}.

\subsubsection*{\textbf{Step 2a}}
We design the secure precoding for user~$1$ with private message $M_{1p}$ in \eqref{eq: mathmodel1_confi1} for Scenario A. 
Since the rate $R_{1p}(\alpha_1)$ of  user 1  is only controlled by the covariance matrix $\mathbf{Q}_1$ under the power constraint $P_1$, the interference-free link can be seen as a P2P MIMO with power $P_1$, which is
\begin{subequations}\label{eq:R1level1}
	\begin{align}
	R_{1p}(\alpha_1) &= \max \limits_{\mathbf{Q}_1 \succeq \mathbf{0}} \frac{1}{2}\log|\mathbf{I} +\mathbf{H}_1 \mathbf{Q}_1 \mathbf{H}_1^T| \\
	&{\rm s.t.}\quad {\rm tr}(\mathbf{Q}_1)\leq P_1 = \alpha_1 P. \label{eq:constrainr11}
	\end{align}
\end{subequations}
The solution for $\mathbf{Q}_1^*$ has been well-developed analytically through singular value decomposition (SVD) and water filling (WF) \cite{cover2012elements}.

\subsubsection*{\textbf{{Step 2b}}} 
 In Scenario B and  Scenario C, we design secure precoding for user~$1$ with confidential messages $M_{1c}$ while treating  the second user as an eavesdropper. Because covariance matrix $\mathbf{Q}_1$ is the only variable in \eqref{eq: mathmodel2_confi1} and \eqref{eq: mathmodel3_confi1}, the problem can be 
seen as a wiretap channel under a transmit power $P_1$, which 
is 
\begin{subequations}\label{eq:R1}
	\begin{align}
	R_{1c}(\alpha_1) &= \max \limits_{\mathbf{Q}_1 \succeq 
		\mathbf{0}} 
	\frac{1}{2}\log \frac{| \mathbf{I} + { \mathbf{H}_1 \mathbf{Q}_1 
			\mathbf{H}_1^T}|}{ |\mathbf{I} +  
		\mathbf{H}_2 \mathbf{Q}_1 \mathbf{H}_2^T|}, \label{eq:R1C1} \\
	&{\rm s.t.}\quad {\rm tr}(\mathbf{Q}_1)\leq P_1 = \alpha P. \label{eq:R1C2}
	\end{align}
\end{subequations}
This problem is 
now the well-known MIMO wiretap channel \cite{vaezi2017journal}, and standard MIMO wiretap solutions can be applied to obtain  $\mathbf{Q}_1^*$. 

\subsubsection*{\textbf{{Step 3a}}}
We design secure precoding for user 2 to maximize the secrecy rate $R_{2p}(\alpha_2)$ of user~$2$ with private message $M_{2p}$, we apply $\mathbf{Q}_1^*$ obtained in \textit{Step 2a} and \textit{Step 2b} to \eqref{eq: mathmodel1_confi2} and \eqref{eq:  mathmodel2_confi2} for { Scenario A} and { Scenario B}, respectively. The interference channel can be transformed to an interference-free link over modified channels. Thus, \eqref{eq: mathmodel1_confi2} or \eqref{eq: mathmodel2_confi2} can be represented as
\begin{subequations} \label{rate2p}
	\begin{align}
	R_{2p}(\alpha_2) &= \max \limits_{\mathbf{Q}_2 \succeq \mathbf{0}}
	\frac{1}{2} \log|\mathbf{I} + ({\mathbf{I} +\mathbf{H}_2 \mathbf{Q}^{*}_1 
		\mathbf{H}_2^T})^{-1}\mathbf{H}_2 
	\mathbf{Q}_2 \mathbf{H}_2^T | \label{eq:constrainr1} \\
	&{\rm s.t.}\quad {\rm tr}(\mathbf{Q}_2)\leq P_2 = \alpha_2 P. \label{eq:constrainr12}
	\end{align}
\end{subequations}
Since $\mathbf{Q}^{*}_1 $ is given after solving \textit{decomposing MIMO-NOMA into different problems}, in the following we show
that the above problem can be seen as a P2P MIMO problem under power $P_2$.
\begin{thm} \label{thm1}
The optimization problem in \eqref{rate2p} with interference
from user 1 can be converted to the optimization of
a standard P2P MIMO  channel
	\begin{align}
	\dot{\mathbf{H}}_2 \triangleq 
	\mathbf{B}^{-\frac{1}{2}}\mathbf{C}^{T}\mathbf{H}_2
	\end{align}
	for user~$2$, in which $\mathbf{B}$ and  $\mathbf{C}$ are the eigenvalue and eigenvector of ${\mathbf{I} +\mathbf{H}_2 \mathbf{Q}^{*}_1\mathbf{H}_2^T}$.
\end{thm}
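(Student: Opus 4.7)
The plan is to show that the interference-plus-noise covariance matrix $\mathbf{K} \triangleq \mathbf{I} + \mathbf{H}_2 \mathbf{Q}_1^* \mathbf{H}_2^T$ can be whitened, and that after whitening, the objective in \eqref{eq:constrainr1} takes exactly the form of the capacity expression for a P2P MIMO link with effective channel $\dot{\mathbf{H}}_2$. The key ingredients are the symmetric eigendecomposition of $\mathbf{K}$ and Sylvester's determinant identity $|\mathbf{I}+\mathbf{AB}|=|\mathbf{I}+\mathbf{BA}|$.

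First I would note that since $\mathbf{Q}_1^*\succcurlyeq\mathbf{0}$, the matrix $\mathbf{K}$ is real, symmetric, and positive definite, so it admits an eigendecomposition $\mathbf{K}=\mathbf{C}\mathbf{B}\mathbf{C}^T$ with $\mathbf{C}\mathbf{C}^T=\mathbf{I}$ and $\mathbf{B}$ diagonal with strictly positive entries. Consequently $\mathbf{K}^{-1}=\mathbf{C}\mathbf{B}^{-1}\mathbf{C}^T$ and $\mathbf{K}^{-1/2}=\mathbf{C}\mathbf{B}^{-1/2}\mathbf{C}^T$ are well defined. I would substitute this factorization into the objective of \eqref{rate2p} to obtain
\begin{equation*}
\tfrac{1}{2}\log\bigl|\mathbf{I}+\mathbf{C}\mathbf{B}^{-1}\mathbf{C}^T\mathbf{H}_2\mathbf{Q}_2\mathbf{H}_2^T\bigr|.
\end{equation*}

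Next, I would apply Sylvester's determinant identity to split the inverse symmetrically. Writing $\mathbf{C}\mathbf{B}^{-1}\mathbf{C}^T=(\mathbf{B}^{-1/2}\mathbf{C}^T)^T(\mathbf{B}^{-1/2}\mathbf{C}^T)$ and cycling $\mathbf{B}^{-1/2}\mathbf{C}^T$ to the other side, the determinant becomes $|\mathbf{I}+\mathbf{B}^{-1/2}\mathbf{C}^T\mathbf{H}_2\mathbf{Q}_2\mathbf{H}_2^T\mathbf{C}\mathbf{B}^{-1/2}|=|\mathbf{I}+\dot{\mathbf{H}}_2\mathbf{Q}_2\dot{\mathbf{H}}_2^T|$, with $\dot{\mathbf{H}}_2=\mathbf{B}^{-1/2}\mathbf{C}^T\mathbf{H}_2$ as claimed. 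Since the constraint \eqref{eq:constrainr12} involves only $\mathbf{Q}_2$ and $P_2=\alpha_2 P$, and since $\mathbf{Q}_1^*$ is fixed from Step 2a/2b (so $\dot{\mathbf{H}}_2$ is a known deterministic matrix), the maximization is exactly that of a standard P2P MIMO channel with channel matrix $\dot{\mathbf{H}}_2$, unit-variance noise, and total power budget $P_2$.

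I do not anticipate any genuine obstacle: this is essentially a noise-whitening argument followed by a determinant manipulation. The only point worth being careful about is making sure $\mathbf{K}^{-1/2}$ is defined via the symmetric square root (so the transformation is self-adjoint and the noise after whitening remains i.i.d. Gaussian), and that $\mathbf{B}$ is invertible, both of which follow from $\mathbf{K}\succ\mathbf{0}$. Once the transformation is in place, the solution for $\mathbf{Q}_2^*$ then inherits the standard SVD-plus-water-filling structure from \cite{cover2012elements} applied to $\dot{\mathbf{H}}_2$ under the power budget $\alpha_2 P$.
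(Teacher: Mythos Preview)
Your proposal is correct and follows essentially the same route as the paper: eigendecompose $\mathbf{I}+\mathbf{H}_2\mathbf{Q}_1^*\mathbf{H}_2^T=\mathbf{C}\mathbf{B}\mathbf{C}^T$, substitute the inverse, and use Sylvester's determinant identity to pull $\mathbf{B}^{-1/2}\mathbf{C}^T$ around so that the objective becomes $\tfrac{1}{2}\log|\mathbf{I}+\dot{\mathbf{H}}_2\mathbf{Q}_2\dot{\mathbf{H}}_2^T|$. If anything, your write-up is slightly more careful than the paper's in explicitly justifying that $\mathbf{B}\succ\mathbf{0}$ (so $\mathbf{B}^{-1/2}$ exists) and in noting that the constraint set is unaffected.
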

\begin{proof}
	Define  
	\begin{align}\label{eq: sigma1}
	\mathbf{\Sigma} \triangleq \mathbf{I}+\mathbf{H}_2
	\mathbf{Q}^{*}_1
	\mathbf{H}_2^T=\mathbf{C}\mathbf{B}\mathbf{C}^T.  
	\end{align}
	Then, the secrecy rate for user~$2$ with private $M_{2p}$ can be written as
	\begin{align} \label{eq: derivation1}
	R_{2p}(\alpha_2) &=\max \limits_{\mathbf{Q}_2 \succeq \mathbf{0}}  
	\frac{1}{2}\log {|\mathbf{I} + \mathbf{\Sigma}^{-1} \mathbf{H}_2 \mathbf{Q}_2 
		\mathbf{H}_2^T|} \notag \\
	& = \max \limits_{\mathbf{Q}_2 \succeq \mathbf{0}}  \frac{1}{2}\log 
	{|\mathbf{I} +  \mathbf{C}\mathbf{B}^{-1}\mathbf{C}^{T} \mathbf{H}_2 \mathbf{Q}_2 
		\mathbf{H}_2^T|}  \notag \\
	&\stackrel{(a)}{=} \max \limits_{\mathbf{Q}_2 \succeq \mathbf{0}}  \frac{1}{2}\log 
	{|\mathbf{I} +  
		\mathbf{B}^{-\frac{1}{2}}\mathbf{C}^{T}\mathbf{H}_2 
		\mathbf{Q}_2 
		\mathbf{H}_2^T\mathbf{C}\mathbf{B}^{-\frac{1}{2}}|} 
	\notag \\
	&=\max \limits_{\mathbf{Q}_2 \succeq \mathbf{0}}  \frac{1}{2}\log 
	{|\mathbf{I} +  \dot{\mathbf{H}}_2 \mathbf{Q}_2 
		\dot{\mathbf{H}}^T_2|},  
	\end{align}
	in which $(a)$  holds because of Sylvester’s determinant theorem, i.e., $\det(\mathbf{I} + \mathbf{X}\mathbf{Y}) = 
	\det(\mathbf{I} + \mathbf{Y}\mathbf{X})$, 
	$\mathbf{C}$ is orthogonal, i.e., $\mathbf{C}^{-1} = \mathbf{C}^{T}$, and $\mathbf{B}$  is a diagonal matrix.
\end{proof}
In view of \eqref{eq: derivation1}, the problem in \eqref{rate2p} becomes the standard P2P
MIMO without interference over a modified channel. The solution  $\mathbf{Q}_2^*$ can be obtained the same as \textit{{Step 2a}}.

\subsubsection*{\textbf{{Step 3b}}}
We design secure precoding for user 2  to maximize the secrecy rate $R_{2c}(\alpha_2)$ of user~$2$ with confidential message $M_{2c}$, we apply $\mathbf{Q}_1^*$ obtained in \textit{Step 2b} to \eqref{eq: mathmodel3_confi2} for Scenario~C. Thus, \eqref{eq: mathmodel3_confi2} can be represented as
\begin{subequations} \label{eq:R2}
	\begin{align}
	R_{2c}({\alpha_2}) &= \max \limits_{\mathbf{Q}_2 \succeq \mathbf{0}} 
	\bigg\{\frac{1}{2}{ \log}\bigg|\mathbf{I} + \frac{\mathbf{H}_2 \mathbf{Q}_2 
		\mathbf{H}_2^T}{\mathbf{I} +\mathbf{H}_2 \mathbf{Q}^{*}_1 
		\mathbf{H}_2^T}\bigg|\notag \\
	& \quad \quad \quad \quad \quad \quad	-\frac{1}{2} \log \bigg|\mathbf{I} + \frac{\mathbf{H}_1 \mathbf{Q}_2 \mathbf{H}_1^T}{\mathbf{I} +\mathbf{H}_1 \mathbf{Q}^{*}_1 \mathbf{H}_1^T}\bigg|\bigg\}, \label{eq:R2C1}\\
	&{\rm s.t.}\quad {\rm tr}(\mathbf{Q}_2)\leq P_2 = (1-\alpha) P. \label{eq:R2C2}
	\end{align}
\end{subequations} 
Since  $\mathbf{Q}^{*}_1$  is given after solving \eqref{eq:R1}, next we show 
that the  problem \eqref{eq:R2} can be seen as a wiretap 
channel where users~$2$ and $1$ 
are the legitimate user and eavesdropper, respectively. 
\begin{thm}{\cite{qi2020secure}}\label{thm:theorem1}
	The above channel can be converted to a standard MIMO wiretap channel with 	\begin{subequations} \label{eq: sigmas1}
		\begin{align}
		\ddot{\mathbf{H}}_1 \triangleq 
		\mathbf{D}^{-\frac{1}{2}}_a\mathbf{E}^{T}_a\mathbf{H}_1, \\
		\ddot{\mathbf{H}}_2 \triangleq 
		\mathbf{D}^{-\frac{1}{2}}_b\mathbf{E}^{T}_b\mathbf{H}_2,
		\end{align}
	\end{subequations}
	in which $\mathbf{D}_a$ and 
	$\mathbf{E}_a$ are the eigenvalues and eigenvectors  of $\mathbf{I}+\mathbf{H}_1 
	\mathbf{Q}^{*}_1 \mathbf{H}_1^T$, and $\mathbf{D}_b$ and 
	$\mathbf{E}_b$ are the eigenvalues and eigenvectors of 
	$\mathbf{I}+\mathbf{H}_2 
	\mathbf{Q}^{*}_1 \mathbf{H}_2^T$.
\end{thm}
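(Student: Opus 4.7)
The plan is to reduce the claim to two independent applications of the same eigendecomposition argument used in the proof of Theorem~1, applied separately to the two log-determinant terms in \eqref{eq:R2C1}. Because $\mathbf{Q}_1^{*}$ has already been fixed in \emph{Step~2b}, both interference covariances $\mathbf{\Sigma}_a \triangleq \mathbf{I}+\mathbf{H}_1\mathbf{Q}_1^{*}\mathbf{H}_1^T$ and $\mathbf{\Sigma}_b \triangleq \mathbf{I}+\mathbf{H}_2\mathbf{Q}_1^{*}\mathbf{H}_2^T$ are constant positive-definite matrices, so each admits a symmetric eigendecomposition $\mathbf{\Sigma}_a=\mathbf{E}_a\mathbf{D}_a\mathbf{E}_a^T$ and $\mathbf{\Sigma}_b=\mathbf{E}_b\mathbf{D}_b\mathbf{E}_b^T$ with orthogonal $\mathbf{E}_a,\mathbf{E}_b$ and diagonal $\mathbf{D}_a,\mathbf{D}_b\succ\mathbf{0}$.

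First, I would rewrite the objective in \eqref{eq:R2C1} as
\begin{align}
\tfrac{1}{2}\log\bigl|\mathbf{I}+\mathbf{\Sigma}_b^{-1}\mathbf{H}_2\mathbf{Q}_2\mathbf{H}_2^T\bigr|-\tfrac{1}{2}\log\bigl|\mathbf{I}+\mathbf{\Sigma}_a^{-1}\mathbf{H}_1\mathbf{Q}_2\mathbf{H}_1^T\bigr|,
\end{align}
and then substitute $\mathbf{\Sigma}_b^{-1}=\mathbf{E}_b\mathbf{D}_b^{-1}\mathbf{E}_b^T$ and $\mathbf{\Sigma}_a^{-1}=\mathbf{E}_a\mathbf{D}_a^{-1}\mathbf{E}_a^T$ inside each determinant. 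Factoring $\mathbf{D}_b^{-1}=\mathbf{D}_b^{-1/2}\mathbf{D}_b^{-1/2}$ (and similarly for $\mathbf{D}_a$), I apply Sylvester's determinant identity $\det(\mathbf{I}+\mathbf{X}\mathbf{Y})=\det(\mathbf{I}+\mathbf{Y}\mathbf{X})$ to cycle $\mathbf{D}_b^{-1/2}\mathbf{E}_b^T$ and $\mathbf{D}_a^{-1/2}\mathbf{E}_a^T$ around the $\mathbf{Q}_2$ sandwich, exactly as in step $(a)$ of the proof of Theorem~1. The first determinant collapses to $|\mathbf{I}+\ddot{\mathbf{H}}_2\mathbf{Q}_2\ddot{\mathbf{H}}_2^T|$ and the second to $|\mathbf{I}+\ddot{\mathbf{H}}_1\mathbf{Q}_2\ddot{\mathbf{H}}_1^T|$ with the definitions in \eqref{eq: sigmas1}.

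After these two simplifications, \eqref{eq:R2C1} becomes
\begin{align}
R_{2c}(\alpha_2)=\max_{\mathbf{Q}_2\succeq\mathbf{0}}\tfrac{1}{2}\log\bigl|\mathbf{I}+\ddot{\mathbf{H}}_2\mathbf{Q}_2\ddot{\mathbf{H}}_2^T\bigr|-\tfrac{1}{2}\log\bigl|\mathbf{I}+\ddot{\mathbf{H}}_1\mathbf{Q}_2\ddot{\mathbf{H}}_1^T\bigr|
\end{align}
subject to the trace constraint \eqref{eq:R2C2}, which is precisely the canonical Gaussian MIMO wiretap secrecy-rate expression with $\ddot{\mathbf{H}}_2$ as the legitimate channel and $\ddot{\mathbf{H}}_1$ as the eavesdropper channel. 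At this point any standard wiretap covariance solver (GSVD, AOWF, or RM cited earlier) applies to yield $\mathbf{Q}_2^{*}$.

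No step looks genuinely hard: the only item requiring care is the bookkeeping of the commutation in Sylvester's identity, because the two interference terms involve different channels $\mathbf{H}_1$ and $\mathbf{H}_2$ and must be whitened by different orthogonal rotations $\mathbf{E}_a,\mathbf{E}_b$; doing them in one combined manipulation would be error-prone, so I would handle each determinant in a separate display and invoke orthogonality of $\mathbf{E}_a$ (resp. $\mathbf{E}_b$) and diagonality of $\mathbf{D}_a$ (resp. $\mathbf{D}_b$) explicitly, mirroring the single-channel derivation in \eqref{eq: derivation1} verbatim.
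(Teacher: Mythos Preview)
Your proposal is correct and follows essentially the same approach as the paper: the paper cites \cite{qi2020secure} for Theorem~\ref{thm:theorem1} and does not spell out the argument, but the intended proof is precisely to apply the eigendecomposition-plus-Sylvester manipulation of Theorem~\ref{thm1} independently to each of the two log-determinant terms in \eqref{eq:R2C1}, yielding \eqref{eq: derivation}. Your plan to whiten the legitimate and eavesdropper terms separately via $\mathbf{E}_b,\mathbf{D}_b$ and $\mathbf{E}_a,\mathbf{D}_a$ is exactly this, and there is nothing missing.
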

\noindent Then, the rate for user~$2$ can be written as
\begin{align} \label{eq: derivation}
	R_{2c}(\alpha_2)
	&=\max \limits_{\mathbf{Q}_2 \succeq \mathbf{0}}  \frac{1}{2}\log 
	\frac{|\mathbf{I} +  \ddot{\mathbf{H}}_2 \mathbf{Q}_2 
		\ddot{\mathbf{H}}^T_2|}{
		|\mathbf{I} +  \ddot{\mathbf{H}}_1 \mathbf{Q}_2 
		\ddot{\mathbf{H}}^T_1|},  
\end{align}

In view of  \eqref{eq: derivation}, it is seen that similar to \eqref{eq:R1C1},  \eqref{eq:R2C1} is the rate for a MIMO wiretap channel with  channels $\ddot{\mathbf{H}}_2$ for the legitimate user and $\ddot{\mathbf{H}}_1$ for the eavesdropper.  This problem now transfers to a MIMO wiretap channel, and we can obtain $\mathbf{Q}_2^*$ using any standard MIMO wiretap solutions.

\subsubsection*{\textbf{{Step 4}}}
After distributing the power to both users for secrecy messages, we allocate the remaining power  $P_0 = \alpha_0 P, \alpha_0 = 1- \alpha_1-\alpha_2$ to  the common message $M_0$ for both users.
The equation \eqref{eq: mathmodel1_common}, \eqref{eq: mathmodel2_common}, and \eqref{eq: mathmodel3_common} becomes
\begin{subequations} \label{eq:r0}
	\begin{align} 
	R_0(\alpha_0) &= \max  \limits_{\mathbf{Q}_0 \succeq \mathbf{0}} \min \{ R_{0j}\}, j=1, 2 \\
	&{\rm s.t.}\quad {\rm tr}(\mathbf{Q}_0)\leq P_0 = \alpha_0 P,
	\end{align}
\end{subequations}
Since  $\mathbf{Q}^{*}_1$ and $\mathbf{Q}^{*}_2$ are given, we can show that the above problem  becomes MIMO multicasting \cite{sidiropoulos2006transmit} by applying the same approach as  Theorem~\ref{thm1} again into \eqref{eq:r0details}.
Specifically, let us define the denominator of \eqref{eq:r0details} as
\begin{align} 
\mathbf{K}_j \triangleq \mathbf{I}+\mathbf{H}_j 
(\mathbf{Q}^{*}_1+\mathbf{Q}^{*}_2) 
\mathbf{H}_j^T   \triangleq \mathbf{F}_j\mathbf{G}_j\mathbf{F}_j^T,  \label{eq:psi}
\end{align}
for ${j=1,2}$, where the second equality is given by eigenvalue decomposition. Then, $ R_{0j} $ can be rewritten as

\begin{align} \label{eq:R0j}
R_{0j}=& \frac{1}{2} \log 
\big|\mathbf{I}+{\mathbf{K}_j^{-1}}\mathbf{H}_j\mathbf{Q}_{0}\mathbf{H}_j^{T}\big|,  \notag\\
=& \frac{1}{2} \log 
\big|\mathbf{I}+\mathbf{G}_j^{-\frac{1}{2}}\mathbf{F}_j^{T}\mathbf{H}_j\mathbf{Q}_{0}\mathbf{H}_j^{T}\mathbf{F}_j\mathbf{G}_j^{-\frac{1}{2}}
\big|, \notag \\
=& \frac{1}{2}\log 
|\mathbf{I} +  \dddot{\mathbf{H}}_j \mathbf{Q}_0 
\dddot{\mathbf{H}}_j^T|.   
\end{align}
See the proof in Theorem~\ref{thm1}. Then we have  
$\dddot{\mathbf{H}}_j = 
\mathbf{G}_j^{-\frac{1}{2}}\mathbf{F}_j^{T}\mathbf{H}_j$, $j=1,2$.

The problem \eqref{eq:r0} with \eqref{eq:R0j} is now identified as the MIMO multicasting which is to maximize the minimum user rate configuration, and the optimal solution $\mathbf{Q}^{*}_{0}$ can be achieved by 
semidefinite programming (SDP), i.e., \texttt{CVX}, however, it may incur a high computational complexity for multiple users and antennas. As we will see in the next subsection,  analytical solutions together with a convex tool for different cases are proposed for multicast transmission.	

\subsection{{ The Signaling Design}} \label{sec:IVB}

\begin{figure}[t]
	\centering
	\includegraphics[width=0.48\textwidth]{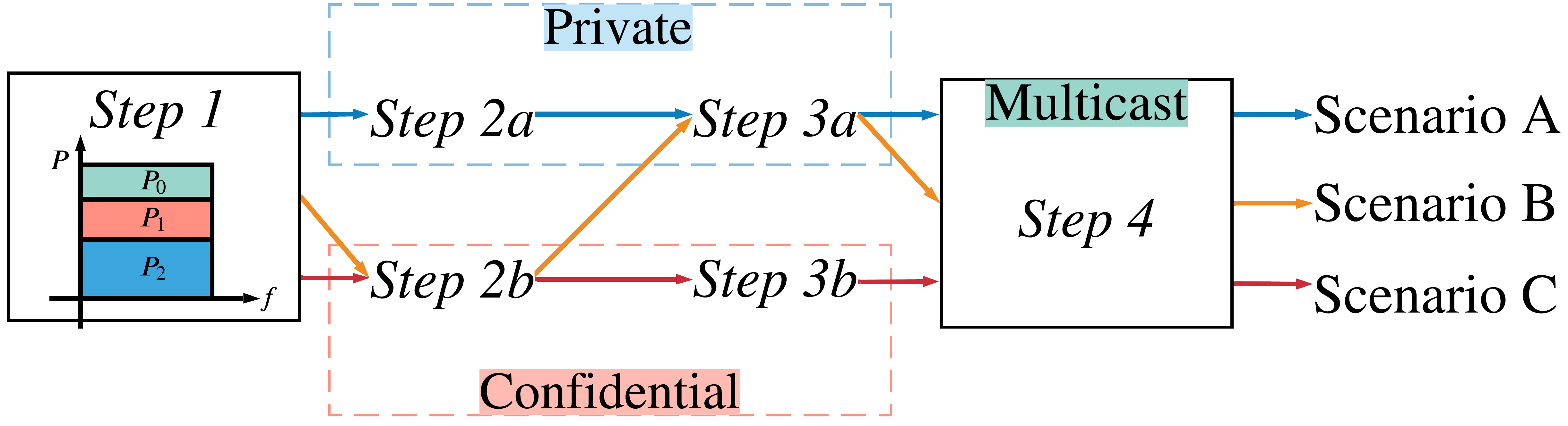}
	\caption{System structure of power splitting method for different security scenarios.}
	\label{fig:structure}
\end{figure}

	The structure of our decomposition of the MIMO-NOMA into different problems is shown in Fig.~\ref{fig:structure}. Then, we solve each sub-problem in this subsection, i.e., design precoding and power allocation for all the secrecy  scenarios. Scenario~A is composed of two P2P MIMO and one multicasting;  Scenario~B consists of one wiretap channel, one P2P MIMO, and one multicasting;  Scenario~C has two wiretap channels and one multicasting.

\subsubsection*{Scenario A}
(\textit{Step~1 $\rightarrow$ Step~2a $\rightarrow$  Step~3a $\rightarrow$ Step~4})

Problem \eqref{eq:R1level1} is a P2P MIMO problem which is convex and has a closed-form solution given in the following Lemma \cite{cover2012elements}.
\begin{lem} \cite{cover2012elements} \label{waterfilling}
	For P2P MIMO problem  $\max_{\mathbf{Q}\succcurlyeq \mathbf{0}}  \log|\mathbf{I}+\mathbf{H}\mathbf{Q}\mathbf{H}^{T}|$ under a total power constraint, the optimal solution is given by $\mathbf{Q}^{*} = \mathbf{\Psi}\mathbf{\Gamma}\mathbf{\Psi}^{T}$.
	in which  $\mathbf{H}=\mathbf{\Phi}{\rm diag}(\tau_1, \tau_2,\cdots,\tau_n)\mathbf{\Psi}^{T}$, $\tau_i \geq 0$, $\forall i$, $\mathbf{\Gamma}= {\rm diag}[(\mu-1/\tau_1^2)^{+}, \cdots, (\mu-1/\tau_n^2)^{+}]$, $\mu$ is the water level.
\end{lem}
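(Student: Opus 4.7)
The plan is to follow the textbook derivation by reducing the matrix optimization to a scalar water-filling problem through the singular value decomposition, then invoking KKT conditions. The convexity/concavity structure is favorable here: $\log|\mathbf{I}+\mathbf{H}\mathbf{Q}\mathbf{H}^T|$ is concave in $\mathbf{Q}\succcurlyeq\mathbf{0}$ and the feasible set $\{\mathbf{Q}\succcurlyeq\mathbf{0}: \mathrm{tr}(\mathbf{Q})\leq P\}$ is convex, so any stationary point satisfying KKT will be globally optimal.

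First I would substitute the SVD $\mathbf{H}=\mathbf{\Phi}\mathbf{T}\mathbf{\Psi}^T$, with $\mathbf{T}=\mathrm{diag}(\tau_1,\dots,\tau_n)$, into the objective. Using Sylvester's determinant identity and the orthogonality $\mathbf{\Phi}^T\mathbf{\Phi}=\mathbf{I}$, the log-det reduces to $\log|\mathbf{I}+\mathbf{T}\tilde{\mathbf{Q}}\mathbf{T}^T|$, where $\tilde{\mathbf{Q}}\triangleq\mathbf{\Psi}^T\mathbf{Q}\mathbf{\Psi}$. Since $\mathbf{\Psi}$ is orthogonal, the change of variables preserves both positive semi-definiteness and the trace constraint, so maximizing over $\mathbf{Q}$ is equivalent to maximizing over $\tilde{\mathbf{Q}}\succcurlyeq\mathbf{0}$ with $\mathrm{tr}(\tilde{\mathbf{Q}})\leq P$.

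Next I would argue that the maximizer $\tilde{\mathbf{Q}}^{*}$ must be diagonal. Writing $\mathbf{A}\triangleq\mathbf{I}+\mathbf{T}\tilde{\mathbf{Q}}\mathbf{T}^T$, this is positive definite with diagonal entries $A_{ii}=1+\tau_i^2\tilde{Q}_{ii}$. By Hadamard's inequality, $|\mathbf{A}|\leq\prod_i A_{ii}$ with equality iff $\mathbf{A}$ is diagonal, and only the diagonal entries of $\tilde{\mathbf{Q}}$ enter both the upper bound and the trace constraint. Hence replacing $\tilde{\mathbf{Q}}$ by $\mathrm{diag}(\tilde{Q}_{11},\dots,\tilde{Q}_{nn})$ can only improve (or maintain) the objective while keeping feasibility, so the optimum has the form $\tilde{\mathbf{Q}}^{*}=\mathrm{diag}(p_1,\dots,p_n)$ with $p_i\geq 0$ and $\sum_i p_i\leq P$.

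The problem thus collapses to the scalar program $\max\sum_{i}\log(1+\tau_i^2 p_i)$ subject to $\sum_i p_i\leq P$, $p_i\geq 0$. Forming the Lagrangian
\begin{equation}
\mathcal{L}=\sum_i \log(1+\tau_i^2 p_i)-\lambda\Bigl(\sum_i p_i-P\Bigr)+\sum_i \nu_i p_i,
\end{equation}
the stationarity condition $\tau_i^2/(1+\tau_i^2 p_i)=\lambda-\nu_i$ combined with complementary slackness $\nu_i p_i=0$ yields $p_i=(\mu-1/\tau_i^2)^{+}$ with $\mu\triangleq 1/\lambda$, where $\mu$ is tuned so that $\sum_i p_i=P$ (the constraint is tight by monotonicity of the objective in each $p_i$). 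Undoing the change of variables gives $\mathbf{Q}^{*}=\mathbf{\Psi}\mathbf{\Gamma}\mathbf{\Psi}^T$ with $\mathbf{\Gamma}=\mathrm{diag}[(\mu-1/\tau_i^2)^{+}]$, as claimed.

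No step is genuinely hard here; the only mild subtlety is the diagonalization argument via Hadamard, which must be stated for positive definite (not just semi-definite) matrices and applied to $\mathbf{A}$ rather than directly to $\tilde{\mathbf{Q}}$. Given the classical nature of the result, in the paper it suffices to cite \cite{cover2012elements} and present the statement in the form above; the outline I just gave would constitute a self-contained derivation if needed.
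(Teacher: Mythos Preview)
Your derivation is correct and follows the standard textbook route (SVD reduction, Hadamard's inequality for diagonalization, KKT water-filling). The paper itself does not prove this lemma at all---it simply states the result and cites \cite{cover2012elements}, exactly as you anticipated in your final remark.
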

\noindent The solutions of \eqref{eq:R1level1} in \textit{Step~2a}  and  \eqref{rate2p} in \textit{Step~3a} are achieved by replacing  $\mathbf{H}$ in Lemma~\ref{waterfilling} by $\mathbf{H}_1$  and $\dot{\mathbf{H}}_2$, respectively, using Theorem~\ref{thm1}.

To precode for the common message $M_0$ in \textit{Step 4}.  Define the optimal precoding matrices $\mathbf{Q}_{01}^{*}$ and $\mathbf{Q}_{02}^{*}$ for $R_{01}$  and  $R_{02} $ in \eqref{eq:R0j}, respectively, then we have \cite{qi2020power}
\begin{itemize}
	\item \textit{Case 1:}  $ R_{01}(\mathbf{Q}_{01}^{*})\leq 
	R_{02}(\mathbf{Q}_{01}^{*})$, then 
	the optimal 
	multicast covariance matrix of \eqref{eq:r0} is $\mathbf{Q}^{*}_0:= 
	\mathbf{Q}_{01}^{*} $. 
	\item \textit{Case 2:}  $ R_{01}(\mathbf{Q}_{02}^{*})\geq 
	R_{02}(\mathbf{Q}_{02}^{*})$, 	the optimal 
	multicast covariance matrix of \eqref{eq:r0} is $\mathbf{Q}^{*}_0:= 
	\mathbf{Q}_{02}^{*} $. 
	\item \textit{Case 3:} Otherwise, the optimal 
	multicast covariance matrix of \eqref{eq:r0} can be obtained by a random search.
\end{itemize}

For \textit{Case 1} or \textit{Case 2}, Lemma~\ref{waterfilling} \cite{cover2012elements} is  applied.
 For \textit{Case~3}, optimal $\mathbf{Q}^{*}_0$ happens 
when 
the two convex functions are equal. Then, we can generate $\mathbf{Q}_0$ using the rotation method and search the parameters non-linearly \cite{vaezi2019rotation}.

Finally, DPC rate region $\mathcal{R}^{\rm{DPC}}_{21}$ can be reached by exhaustively searching over all power fractions $\alpha_1$, $\alpha_2$ and  $\alpha_0$. For each pair of power splitting parameters $\alpha_1$, $\alpha_2$, and $\alpha_0$, we solve  precoding matrices $\mathbf{Q}^{*}_1$, $\mathbf{Q}^{*}_2$, and $\mathbf{Q}^{*}_0$ (and thus $R_{1p}(\alpha_1)$, $R_{2p}(\alpha_2)$, and $R_{0}(\alpha_0)$). Alternatively, $\mathcal{R}^{\rm{DPC}}_{21}$ is obtained by encoding the private messages for user 2 and user 1, then the common message for both. We can solve $\mathbf{Q}^{*}_2$ followed by $\mathbf{Q}^{*}_1$ and $\mathbf{Q}^{*}_0$ to obtain $\bar{R}_{1p}(\alpha_1)$, $\bar{R}_{2p}(\alpha_2)$, and $\bar{R}_{0}(\alpha_0)$, respectively.

\begin{cor} \label{cor:corollary1}
	The achievable DPC rate region for secure MIMO-NOMA Scenario A under the total power is the convex hull of all rate triplets
	\begin{align}
	R_{{ A}}(P) = 
 {\rm conv} \bigg\{ \bigg(\bigcup_{\alpha_k} \mathcal{R}^{{\rm DPC}}_{12}(\alpha_k)  \bigg) \bigcup \bigg(  \bigcup_{\alpha_k} \mathcal{R}^{{\rm DPC}}_{21}(\alpha_k)\bigg) \bigg\},
	\end{align}
\end{cor}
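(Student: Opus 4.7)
The plan is to prove the identity by establishing two inclusions between the PS-based expression on the right-hand side and the capacity-region characterization in \eqref{eq: level1rate}.

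For the inclusion ``$\subseteq$'', fix any triple $(\alpha_0,\alpha_1,\alpha_2)$ with $\alpha_k\ge 0$ and $\sum_k\alpha_k=1$. Steps~1--4 of the PS construction return positive semidefinite matrices $(\mathbf{Q}_0^*,\mathbf{Q}_1^*,\mathbf{Q}_2^*)$ satisfying ${\rm tr}(\mathbf{Q}_k^*)\le\alpha_k P$, so their sum meets the total-power constraint \eqref{eq:power cons}. Plugging these matrices into the rate bounds \eqref{eq: mathmode1} yields a valid triplet in $\mathcal{R}^{\rm{DPC}}_{12}$, and the analogous argument with the swapped encoding order yields a point in $\mathcal{R}^{\rm{DPC}}_{21}$. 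Hence both PS unions are contained in the original capacity-defining sets, and taking the outer convex hull is therefore contained in $\mathrm{conv}\{\mathcal{R}^{\rm{DPC}}_{12}\cup\mathcal{R}^{\rm{DPC}}_{21}\}=R_A(P)$.

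For the reverse inclusion, I would take an arbitrary triplet in $R_A(P)$, which by \eqref{eq: level1rate} is a convex combination of points in $\mathcal{R}^{\rm{DPC}}_{12}\cup\mathcal{R}^{\rm{DPC}}_{21}$. It therefore suffices to show that every point of $\mathcal{R}^{\rm{DPC}}_{12}$ sits in $\bigcup_{\alpha_k}\mathcal{R}^{\rm{DPC}}_{12}(\alpha_k)$ (and symmetrically for $\mathcal{R}^{\rm{DPC}}_{21}$). Pick $(\mathbf{Q}_0,\mathbf{Q}_1,\mathbf{Q}_2)$ realizing such a point and define $\alpha_k\triangleq{\rm tr}(\mathbf{Q}_k)/P$. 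Lemma~\ref{waterfilling} applied in Step~2a gives $R_{1p}\le R_{1p}(\alpha_1)$; Theorem~\ref{thm1} together with water-filling in Step~3a gives $R_{2p}\le R_{2p}(\alpha_2)$ under the PS choice of $\mathbf{Q}_1^*$; and the case analysis of Step~4 gives $R_0\le R_0(\alpha_0)$. The PS triplet with parameters $(\alpha_0,\alpha_1,\alpha_2)$ therefore dominates the reference point componentwise, placing it in the PS union.

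The main obstacle arises in Step~3a: the PS method commits to the P2P-optimal $\mathbf{Q}_1^*$ \emph{before} selecting $\mathbf{Q}_2$, whereas the reference decomposition may use a different $\mathbf{Q}_1$ that creates less interference at user~2. Establishing componentwise domination of $R_{2p}$ for a single split is therefore not automatic, and the argument has to rely on sweeping over all feasible power splits and both encoding orders, together with the outer convex-hull operator, to recover every Pareto-optimal point of $R_A(P)$ by time-sharing. This convexification is precisely what the $\mathrm{conv}$ in the corollary encodes, so the final step is just to invoke the exhaustive search over $\alpha_k$ already discussed in the paragraph preceding the statement.
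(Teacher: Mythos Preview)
Your proposal misreads the corollary as asserting equality between the PS region and the capacity region in \eqref{eq: level1rate}. In the paper, Corollary~\ref{cor:corollary1} is a \emph{definitional} summary: it records the region swept out by the PS procedure (Steps~1--4) over all power splits and both encoding orders, and the paper offers no proof beyond the paragraph preceding the statement. The reuse of the symbol $R_A(P)$ is admittedly sloppy notation, but the text makes the intent clear---see the sentence ``This simplifies the problem but is sub-optimal in general'' after Corollary~\ref{cor:corollary3}, and the numerical section, which only claims the PS region is \emph{close to} the DPC region.

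Your ``$\subseteq$'' direction is fine and is essentially all the paper needs: each PS triplet comes from feasible covariances, so it lies in $\mathcal{R}^{\rm DPC}_{12}\cup\mathcal{R}^{\rm DPC}_{21}$. The ``$\supseteq$'' direction, however, is neither claimed nor true in general, and you have in fact identified the reason yourself: in Step~3a the PS scheme fixes $\mathbf{Q}_1^*$ to be water-filling-optimal for user~1 \emph{before} choosing $\mathbf{Q}_2$, so a capacity-achieving pair $(\mathbf{Q}_1,\mathbf{Q}_2)$ with a different $\mathbf{Q}_1$ need not be dominated by any single PS point. Your attempted rescue via ``sweeping over all splits and taking the convex hull'' does not close this gap---time-sharing over PS points still cannot reach a Pareto point that no PS point dominates. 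So the reverse inclusion should simply be dropped; what remains is the achievability statement, which matches the paper's treatment.
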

\noindent $\mathcal{R}^{\rm{DPC}}_{12}(\alpha_k) = (R^{*}_{1p}(\alpha_1),  R^{*}_{2p}(\alpha_2), R^{*}_0(\alpha_0))$, $k=0,1,2$, and is obtained by encoding the private messages for first user~1 then user~2 followed by the common message for both, whereas $\mathcal{R}^{\rm{DPC}}_{21}(\alpha_k) = (\bar{R}^{*}_{1p}(\alpha_1), \bar{R}^{*}_{2p}(\alpha_2), \bar{R}^{*}_0(\alpha_0))$ is obtained in the reverse order of private messages (first user~2 then user~1). 

\subsubsection*{Scenario B}(\textit{Step~1 $\rightarrow$ Step~2b $\rightarrow$ Step~3a $\rightarrow$ Step~4})

In light of our decomposition in the previous section, Scenario~B is characterized as one MIMO wiretap channel in \textit{Step~2b} and one P2P MIMO problem in \textit{Step~3a}.
Standard MIMO wiretap solutions can be applied to design 
covariance matrix $\mathbf{Q}_1$.  
One fast approach is the rotation-based linear precoding \cite{vaezi2017journal}. In this method, the covariance matrix 
$\mathbf{Q}_1$ 
is eigendecomposed into one rotation matrix $\mathbf{V}_1$ and one power 
allocation matrix $\mathbf{\Lambda}_1$ \cite{vaezi2017journal, vaezi2019rotation} 
as
\begin{align}\label{eq_eig1}
\mathbf{Q}_1= \mathbf{V}_1 \mathbf{\Lambda}_1 \mathbf{V}_1^T.
\end{align}
Consequently, the secrecy capacity of 
user~1 is \footnote{This paper is focused on the two-user case. For  $K$-user MIMO wiretap channel with an external eavesdropper,  $K$ covariance matrices should be constructed independently. When applying the rotation method for user $k$, we will have independent rotation parameters for each user to optimize.  For multiple external eavesdroppers, a large-scale analysis may be needed to evaluate the security performance \cite{liu2017enhancing}.}
\begin{subequations} \label{eq:r1star}
	\begin{align} 	
	R_{1c}(\alpha_1)
	&=\max \limits_{\mathbf{Q}_1 \succeq \mathbf{0}} \frac{1}{2}\log \frac{|\mathbf{I} 
		+ { \mathbf{H}_1 \mathbf{V}_1\mathbf{\Lambda}_1\mathbf{V}_1^{T} 
			\mathbf{H}_1^T}|}{|  \mathbf{I} 
		+ \mathbf{H}_2 
		\mathbf{V}_1\mathbf{\Lambda}_1\mathbf{V}_1^{T} \mathbf{H}_2^T|}, 
	\label{eq:r1star1}\\
	&{\rm s.t.}\quad \sum_{n = 1}^{n_t} \lambda_{1n} \leq P_1 = 
	\alpha P, \label{eq:r1starconst}
	\end{align}
\end{subequations}
in which $\lambda_{1n}$, $n= \{1,\dots,n_t\}$, is a diagonal 
element of matrix $\mathbf{\Lambda}_1 = {\rm diag}(\lambda_{11}, \dots, \lambda_{1n_t})$.
The rotation matrix $\mathbf{V}_1$ can be obtained by
\begin{align}\label{eq_Vnbyn_}
\mathbf{V}_1=\prod_{p=1}^{n_t-1}\prod_{q=p+1}^{n_t} \mathbf{V}_{pq},
\end{align}
in which the basic rotation matrix $\mathbf{V}_{pq}$ is a Givens matrix which is an identity matrix except that its elements in the 
$p$th row and $q$th column, i.e., $v_{pp}$, $v_{pq}$, 
$v_{qp}$, and $v_{qq}$ are replaced by 
\begin{align}\label{eq_VnDsub}
\left[
\begin{matrix}
v_{pp}	&v_{pq}\\
v_{qp}	&v_{qq}
\end{matrix}
\right]
=\left[
\begin{matrix}
\cos\theta_{1pq}	&-\sin \theta_{1pq}\\
\sin\theta_{1pq}	&\cos \theta_{1pq}
\end{matrix}
\right],
\end{align}
in which  
$\theta_{1pq}$ is rotation angle corresponding to the rotation 
matrix $ \mathbf{V}_{pq} $.
Then, we will optimize the parameterized problem 
by applying numerical approaches such as Broyden-Fletcher-Goldfarb-Shanno (BFGS) method \cite{nocedal2006numerical} to obtain the solution $\mathbf{Q}^{*}_1$ (thus ${R}^{*}_{1c}(\alpha_1)$) with respect to 
rotation angles and power allocation parameters. 
Then, to obtain $\mathbf{Q}^{*}_2$ and ${R}^{*}_{2p}(\alpha_2)$ in \textit{Step 3a}, $\mathbf{Q}^{*}_1$ above is applied in Theorem~\ref{thm1}, and we solve the modified P2P MIMO problem using Lemma~\ref{waterfilling}. The precoding approach for \textit{Step 4} is the same as Scenario~A. The achievable secrecy rate for Scenario~B is given by the following corollary.
\begin{cor} \label{cor:corollary2} 
	The achievable rate region for secure MIMO-NOMA Scenario B under the total power is the convex hull of all rate triplets
	\begin{align}
	R_{{ B}}(P)= 
	\bigcup_{\alpha_k} ({R}^{*}_{1c}(\alpha_1),{R}^{*}_{2p}(\alpha_2),{R}^{*}_{0}(\alpha_0)).
	\end{align}
\end{cor}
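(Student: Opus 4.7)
The plan is to prove Corollary~\ref{cor:corollary2} as an achievability statement by showing that the power-splitting construction of Steps~1, 2b, 3a, and~4, run over all feasible splits $(\alpha_0,\alpha_1,\alpha_2)$ on the probability simplex, realizes the secrecy capacity region $R_B(P)$ characterized by \eqref{eq: mathmode2}. I would split the argument into (i) \emph{feasibility and tightness of a single triple} for a fixed split, and (ii) \emph{exhaustion} over all splits, and then contrast with Scenario~A to explain why no convex-hull operation is required.

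For (i), feasibility is immediate: the matrices $\mathbf{Q}_1^\ast,\mathbf{Q}_2^\ast,\mathbf{Q}_0^\ast$ produced by Steps~2b, 3a, and~4 satisfy $\mathrm{tr}(\mathbf{Q}_k^\ast)\le\alpha_k P$ individually, so their sum obeys \eqref{eq:power cons}. The three rate components then follow directly from the decomposition already established: $\mathbf{Q}_1^\ast$ solves the wiretap problem \eqref{eq:R1} and makes \eqref{eq: mathmodel2_confi1} tight; given $\mathbf{Q}_1^\ast$, Theorem~\ref{thm1} reduces \eqref{eq:  mathmodel2_confi2} to a P2P MIMO channel with modified matrix $\dot{\mathbf{H}}_2$, and Lemma~\ref{waterfilling} yields the maximizer $\mathbf{Q}_2^\ast$; given both $\mathbf{Q}_1^\ast$ and $\mathbf{Q}_2^\ast$, the reduction around \eqref{eq:R0j} converts $\min\{R_{01},R_{02}\}$ to a MIMO multicasting problem whose solver (analytic in Cases~1--2 and rotation-based in Case~3, as in Scenario~A) produces $\mathbf{Q}_0^\ast$. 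Together this gives $(R^\ast_{1c}(\alpha_1),R^\ast_{2p}(\alpha_2),R^\ast_0(\alpha_0))\in R_B(P)$ for each split.

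For (ii), I would contrast with Corollary~\ref{cor:corollary1}: the region of Scenario~A requires the convex hull over two DPC orders $\mathcal{R}^{\mathrm{DPC}}_{12}$ and $\mathcal{R}^{\mathrm{DPC}}_{21}$, whereas the text below \eqref{eq: mathmode2} points out that Scenario~B is exhausted by the single order in which DPC cancels $M_{2p}$ at user~2; hence no reverse-order triple is needed and a plain union over $\alpha_k$ suffices. The converse direction---that every boundary point of $R_B(P)$ arises from some split---is handled by taking any capacity-achieving $(\mathbf{Q}_0,\mathbf{Q}_1,\mathbf{Q}_2)$ and setting $\alpha_k:=\mathrm{tr}(\mathbf{Q}_k)/P$. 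The main obstacle is verifying that the sequential per-stage maximizations of Steps~2b, 3a, and~4 are jointly Pareto-optimal rather than merely greedy; the key structural fact making this work in Scenario~B is that \eqref{eq: mathmodel2_confi1} depends only on $\mathbf{Q}_1$, \eqref{eq:  mathmodel2_confi2} is monotone in $\mathbf{Q}_2$ once $\mathbf{Q}_1$ is fixed, and the common-rate expression is monotone in $\mathbf{Q}_0$ given $\mathbf{Q}_1,\mathbf{Q}_2$, so no ``backward'' coupling penalizes earlier choices---in contrast to Scenario~C, where an eavesdropper term in $\mathbf{H}_1$ also appears in $R_{2c}$.
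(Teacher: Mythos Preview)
Your proposal overshoots what the corollary actually asserts. In the paper, Corollary~\ref{cor:corollary2} is not given a separate proof; it is stated as a \emph{definitional summary} of the rate triples produced by the power-splitting construction (Steps~1, 2b, 3a, 4) as the split $(\alpha_0,\alpha_1,\alpha_2)$ ranges over the simplex. The only content requiring justification is feasibility of each triple under \eqref{eq:power cons} and the observation---which you correctly make---that in Scenario~B a single DPC order suffices, so no union over encoding orders (and hence no outer convex hull as in Corollaries~\ref{cor:corollary1} and~\ref{cor:corollary3}) is needed. The paper explicitly describes the PS method as ``sub-optimal in general'' and, in the numerical section and Table~\ref{summation}, treats the DPC region as an upper bound that PS merely approaches. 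The reuse of the symbol $R_B(P)$ is an unfortunate notational overload, not a claim of equality with the capacity region.

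Your part~(i) is therefore essentially the whole argument the paper intends. Your part~(ii)---the ``converse'' that every boundary point of the capacity region arises from some split---is both unnecessary and incorrect. The flaw in your ``no backward coupling'' reasoning is that there \emph{is} forward coupling: $\mathbf{Q}_1$ enters $R_{2p}$ through the interference term $(\mathbf{I}+\mathbf{H}_2\mathbf{Q}_1\mathbf{H}_2^T)^{-1}$ in \eqref{eq:  mathmodel2_confi2} and enters $R_0$ through \eqref{eq:r0details}. Choosing $\mathbf{Q}_1$ greedily to maximize $R_{1c}$ alone under the budget $\alpha_1 P$ ignores its downstream impact, so even after setting $\alpha_k=\mathrm{tr}(\mathbf{Q}_k)/P$ from a capacity-achieving triple, the sequential maximizers of Steps~2b, 3a, 4 will in general \emph{not} reproduce the original $(\mathbf{Q}_0,\mathbf{Q}_1,\mathbf{Q}_2)$. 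This is precisely why the paper proposes WSR/BSMM (Section~\ref{sec:III}) as a separate, tighter method when $M_0=\emptyset$.
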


\subsubsection*{Scenario C} (\textit{Step~1 $\rightarrow$ Step~2b $\rightarrow$  Step~3b $\rightarrow$ Step~4})

In  Scenario C, the steps are the same as  Scenario B except for \textit{Step 3b} which can be seen as a wiretap channel instead of P2P MIMO. Then, we apply Theorem~\ref{thm:theorem1} and solve \eqref{eq: derivation} instead. Similar to the precoding in \textit{Step 2b} of  Scenario~B, the covariance matrix $\mathbf{Q}_2$ can be written by rotation method as $\mathbf{Q}_2= \mathbf{V}_2 \mathbf{\Lambda}_2 \mathbf{V}_2^T$,
where the rotation matrix $\mathbf{V}_2$ is defined similarly to $\mathbf{V}_1$ in \eqref{eq_Vnbyn_} with rotation angles are
$\theta_{2pq}$. Therefore, the optimization problem for $R_{2c}(\alpha_2)$ becomes
\begin{subequations} \label{eq:r2star}
	\begin{align} 	
	R_{2c}(\alpha_2) 
	&=\max \limits_{\mathbf{Q}_2 \succeq \mathbf{0}} \frac{1}{2}\log 
	\frac{|\mathbf{I} + { \ddot{\mathbf{H}}_2 
			\mathbf{V}_2\mathbf{\Lambda}_2\mathbf{V}_2^{T} \ddot{\mathbf{H}}^{T}_2}|}{|\mathbf{I} +  \ddot{\mathbf{H}}_1
		\mathbf{V}_2\mathbf{\Lambda}_2\mathbf{V}_2^{T} \ddot{\mathbf{H}}^{T}_1|} 
	\label{eq:r2star2},\\
	&{\rm s.t.}\quad \sum_{n = 1}^{n_t} \lambda_{2n} \leq P_2 = (1-\alpha) 
	P,\label{eq:r2starconst}
	\end{align}
\end{subequations}
in which $\mathbf{\Lambda}_2 = {\rm diag}(\lambda_{21}, \dots, 
\lambda_{2n_t})$. This problem is again similar to \eqref{eq:r1star}.

In the power splitting scheme,  we solve $\mathbf{Q}_1^{*}$, $\mathbf{Q}_2^{*}$, and  $\mathbf{Q}_0^{*}$ to obtain $R_{1c}^{*}(\alpha_1)$, $R_{2c}^{*}(\alpha_2)$, and $R_0^{*}(\alpha_0)$  with respect to  power splitting parameters pair $(\alpha_1, \alpha_2, \alpha_0)$. 
Alternatively, we can first solve for $\mathbf{Q}_2^{*}$ followed by $\mathbf{Q}_1^{*}$ last $\mathbf{Q}_0^{*}$ (i.e., first $\bar{R}^{*}_{1c}(\alpha_1), 
\bar{R}^{*}_{2c}(\alpha_2)$, then $\bar{R}^{*}_{0}(\alpha_0)$). In general, changing the order of optimization will result in a different rate region. The \textit{convex hull} of the two solutions with different orders enlarges the achievable rate region. 
The achievable secrecy rate for 
 Scenario C is given by Corollary \ref{cor:corollary3}.

\begin{cor} \label{cor:corollary3} 
	The achievable DPC rate region for the secure MIMO-NOMA Scenario C under the total power is the convex hull of all rate triplets
	\begin{align}
	R_{{ C}}(P) = 
{\rm conv} \bigg\{ \bigg(\bigcup_{\alpha_k} \mathcal{R}_{12}(\alpha_k)  \bigg) \bigcup \bigg(  \bigcup_{\alpha_k} \mathcal{R}_{21}(\alpha_k)\bigg) \bigg\},
	\end{align}
\end{cor}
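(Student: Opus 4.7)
The plan is to establish Corollary~\ref{cor:corollary3} through a direct achievability argument combined with time-sharing, closely paralleling the reasoning sketched for Corollaries~\ref{cor:corollary1} and \ref{cor:corollary2}. First, I would fix any power-splitting triple $(\alpha_0,\alpha_1,\alpha_2)$ with $\alpha_k\in[0,1]$ and $\sum_k\alpha_k=1$. Following Steps~1, 2b, 3b, and 4, I construct $\mathbf{Q}_1^\ast$ as the optimizer of the MIMO wiretap problem \eqref{eq:R1} under power $\alpha_1 P$, then $\mathbf{Q}_2^\ast$ as the optimizer of the transformed wiretap problem \eqref{eq:r2star} over the equivalent channels $\ddot{\mathbf{H}}_1,\ddot{\mathbf{H}}_2$ given by Theorem~\ref{thm:theorem1} under power $\alpha_2 P$, and finally $\mathbf{Q}_0^\ast$ as the optimizer of the MIMO multicasting problem \eqref{eq:r0} with $\mathbf{Q}_1^\ast,\mathbf{Q}_2^\ast$ held fixed, under power $\alpha_0 P$. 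By construction ${\rm tr}(\mathbf{Q}_0^\ast+\mathbf{Q}_1^\ast+\mathbf{Q}_2^\ast)\le P$, so the triplet $(R_{1c}^\ast(\alpha_1),R_{2c}^\ast(\alpha_2),R_0^\ast(\alpha_0))$ lies inside the region \eqref{eq: mathmodel3} of Theorem~2 in \cite{Hung2013Liu}. Taking the union over all admissible $(\alpha_0,\alpha_1,\alpha_2)$ yields $\bigcup_{\alpha_k}\mathcal{R}_{12}(\alpha_k)$.

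Next, I would account for the second encoding order. The rate expressions \eqref{eq: mathmodel3_confi1}--\eqref{eq: mathmodel3_confi2} are asymmetric because S-DPC encodes $M_{2c}$ against interference from $M_{1c}$; swapping the roles corresponds to encoding $M_{1c}$ against $M_{2c}$ and produces the constraints obtained by interchanging the indices $1\leftrightarrow 2$ throughout. Running the same four steps in reverse order (first solving for $\mathbf{Q}_2^\ast$ as a wiretap with user~1 as eavesdropper, then $\mathbf{Q}_1^\ast$ through the transformed wiretap, finally $\mathbf{Q}_0^\ast$) produces the family $\bigcup_{\alpha_k}\mathcal{R}_{21}(\alpha_k)$. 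Both families sit inside the capacity region because each is derived from a valid S-DPC encoding strategy.

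Finally, I would invoke standard time-sharing: if two triplets are individually achievable, any convex combination is achievable by allocating fractions of the transmission time to each scheme, with the total power constraint preserved by linearity of the trace. Closing under this operation replaces the union by its convex hull, yielding the expression in the corollary.

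The subtlety I expect to dwell on is verifying that the sequential decomposition --- optimizing $\mathbf{Q}_1$ first, then $\mathbf{Q}_2$ treating $\mathbf{Q}_1^\ast$ as fixed interference, then $\mathbf{Q}_0$ with both fixed --- produces a triple that is simultaneously feasible in all three constraints \eqref{eq: mathmodel3_common}--\eqref{eq: mathmodel3_confi2}. This is precisely what the transformations in Theorems~\ref{thm1} and \ref{thm:theorem1} guarantee: once $\mathbf{Q}_1^\ast$ is frozen, the residual optimization over $\mathbf{Q}_2$ is mathematically identical to a standard wiretap problem over $(\ddot{\mathbf{H}}_1,\ddot{\mathbf{H}}_2)$, so whatever $\mathbf{Q}_2^\ast$ achieves there is the corresponding value of \eqref{eq: mathmodel3_confi2}; the analogous remark applies to $\mathbf{Q}_0^\ast$ and \eqref{eq: mathmodel3_common}. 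The corollary is then essentially a bookkeeping statement: each admissible power split plus an encoding order yields an achievable triplet, and the achievable region is the closed convex hull of all such triplets.
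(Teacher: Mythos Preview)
Your construction is essentially the same one the paper uses: fix $(\alpha_0,\alpha_1,\alpha_2)$, run Steps~1, 2b, 3b, 4 to produce a feasible triple, sweep over all splits, repeat with the roles of the two users swapped, and take the convex hull by time-sharing. The paper does not give a formal proof of the corollary beyond this construction, so in that sense you are aligned.

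However, there is a conceptual inaccuracy in your justification for the second family $\mathcal{R}_{21}$. You write that swapping the roles ``corresponds to encoding $M_{1c}$ against $M_{2c}$'' and that the rate expressions are asymmetric because of the S-DPC encoding order. The paper explicitly states the opposite: in Scenario~C the S-DPC capacity region is \emph{invariant} to encoding order (this is the content of \cite{liu2010multiple,Hung2013Liu}), so swapping the encoding order does not produce a new capacity region or a new set of rate constraints. The reason both $\mathcal{R}_{12}$ and $\mathcal{R}_{21}$ appear is purely algorithmic: the power-splitting scheme is a suboptimal heuristic that optimizes $\mathbf{Q}_1$ and $\mathbf{Q}_2$ \emph{sequentially} rather than jointly, and the order in which you solve them can land you at different achievable points. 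Taking the convex hull of both optimization orders simply enlarges the set of points the heuristic can reach; it is not tied to two distinct S-DPC coding strategies. Your procedural description (``first solving for $\mathbf{Q}_2^\ast$ \ldots then $\mathbf{Q}_1^\ast$'') is correct, but the reason you give for doing it should be amended.
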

\noindent in which $\mathcal{R}_{12}(\alpha_k) = (R^{*}_{1c}(\alpha_1),  R^{*}_{2c}(\alpha_2), R^{*}_0(\alpha_0))$, $k=0,1,2$, is obtained by encoding the confidential messages for user 1 first, then user 2, and lastly the common message for both, whereas $\mathcal{R}_{21}(\alpha_k) = (\bar{R}^{*}_{1c}(\alpha), \bar{R}^{*}_{2c}(\alpha), \bar{R}^{*}_0(\alpha_0))$ is obtained in the reverse order of confidential messages (first user 2 then user 1).

\begin{algorithm}[t]
	\caption{Power splitting for all three scenarios}\label{alg:algorithmPS}
	\begin{algorithmic}[1]
		\STATE inputs: { secrecy scenario $L \in \{\textmd{A}, \textmd{B}, \textmd{C}\}$,} and $\epsilon_1$;
		\FOR{$\alpha_1 = 0:\epsilon_1:1$}
		\FOR{$\alpha_2 = 0:\epsilon_1:1-\alpha_1$}
		\STATE $\alpha_0 = 1-\alpha_1-\alpha_2$;
		\SWITCH {$L$}
		\CASE {{ $\textmd{A}$}}
		\STATE Obtain $\textbf{Q}_1^{*}$ using Lemma~\ref{waterfilling} in problem \eqref{eq:R1level1};
		\STATE Compute $R_{1p}$ in \eqref{eq:R1level1};
		\ENDCASE
		\CASE {$\textmd{B}$ or $\textmd{C}$}
		\STATE Obtain $\textbf{Q}_1^{*}$ by solving \eqref{eq:r1star} using BFGS;
		\STATE Compute $R_{1c}$ in \eqref{eq:R1};
		\ENDCASE
		\ENDSWITCH
		\SWITCH $L$
		\CASE {$\textmd{A}$ or $\textmd{B}$}
		\STATE Obtain $\textbf{Q}_2^{*}$ using Theorem~\ref{thm1}, the $\textbf{Q}_1^{*}$ in Line $7$ or Line $10$ with respect to $\textmd{A}$ or $\textmd{B}$, and Lemma~\ref{waterfilling} in problem \eqref{rate2p};
		\STATE Compute $R_{2p}$ in \eqref{rate2p};
		\ENDCASE
		\CASE {$\textmd{C}$}
		\STATE Obtain $\textbf{Q}_2^{*}$ using Theorem~\ref{thm:theorem1}, the $\textbf{Q}_1^{*}$ in Line $10$, and BFGS by solving \eqref{eq:r2star};
		\STATE Compute $R_{2c}$ in  \eqref{eq:R2};
		\ENDCASE
		\ENDSWITCH
		\STATE Compute $R_0$ as described in \textbf{\textit{Step 4}};
		\ENDFOR
		\ENDFOR
		\IF {{ $L=\textmd{A}$ or $L=\textmd{C}$}}
		\STATE swap all subscripts of 1 and 2 in  \eqref{eq: mathmode1} or \eqref{eq: mathmodel3};
		\STATE repeat \textbf{switch} and obtain $\mathcal{R}^{\rm DPC}_{21}(\alpha_k)$ or $\mathcal{R}_{21}(\alpha_k)$ in Corollary~\ref{cor:corollary1} and Corollary~\ref{cor:corollary3};
		\ENDIF
		\STATE outputs: $R_{\rm{L}}(P)$.
	\end{algorithmic}
\end{algorithm}

	Algorithm~\ref{alg:algorithmPS} summarizes the power splitting method for all of the scenarios. $\epsilon_1$ is the searching step for the power allocation factor.
	If $\alpha_1=\alpha_2=0$ and $\alpha_0\neq0$, then the system reduces to multicasting transmission. If no power is allocated to the common message, it is the private transmission cases in the next section \ref{sec:III}. If only the power of one of the secrecy messages is zero ($\alpha_k = 0$, $k=1$ or $2$), the problem is integrated service with confidential and common messages \cite{vaezi2019rotation}.

The precoding order for secrecy messages at different  scenarios is not the same.
Corollary~\ref{cor:corollary1} and Corollary~\ref{cor:corollary3} in Scenario~A and Scenario~C require an exchange of subscripts. For Scenario~A, this is because the encoding order affects the achievable rate region. For Scenario~C, although encoding order is irrelevant to the achievable rate in S-DPC, the order of optimization (solve the covariance matrix) will affect the solution \cite{qi2020secure}.
 This is because the power splitting method splits the power among the messages and solves them one by one. This simplifies the problem but is sub-optimal in general. Then, changing the precoding order may enlarge the achievable rate region.
For scenario B, as proved in \cite[Remark 4]{goldfeld2019mimo}, it is always better to cancel the private massage $M_{2p}$ at user 1 and treat the confidential message $M_{1c}$ at user 2 as noise \cite[Remark 4]{goldfeld2019mimo}. Thus, there is no need to exchange the precoding order.  


%
%
%



\begin{rem}[Complexity]
	For Scenario~A, \textit{Step~2a}, \textit{Step 3a}, Case~1, and Case~2 in \textit{Step~4} are analytical, which only requires the computation of matrix multiplications and matrix inverse. The computation of matrix multiplications and matrix inverse has the complexity of $\mathcal{O}(m^3)$ in which $m= \max(n_t, n_1, n_2)$.  Case~3 in \textit{Step 4} uses \texttt{fmincon} which is achieved mainly by BFGS. The BFGS algorithm 	yields the complexity $\mathcal{O}(n^2)$ \cite{nocedal2006numerical}, and the input variable $n=\frac{(n_t+1)n_t}{2}$ is rotation parameters \cite{zhang2020rotation}. Thus, the complexity of  Scenario~A is $\mathcal{O}(\frac{m^3+n_t^4}{\epsilon_1^2})$. For  Scenario~B and  Scenario~C, The complexity of solving wiretap channels in \textit{Step~2b} and \textit{Step~3b} has $\mathcal{O}(m^3+n_t^4)$. Ignore the coefficient, the overall complexity of Algorithm~\ref{alg:algorithmPS} is $\mathcal{O}(\frac{m^3+n_t^4}{\epsilon_1^2})$. The achievable DPC rate region in Scenario~A \cite{ekrem2010gaussian}, the capacity region achieved by DPC in Scenario~B \cite{goldfeld2019mimo}, and the capacity region achieved by S-DPC in Scenario~C \cite{ekrem2012capacity, Hung2013Liu} are found by using an exhaustive search over a set of positive semidefinite matrices, which have exponential complexity in terms of $m$. The three-dimensional space search in DPC or S-DPC has to be ``exhaustive'' but the search over power allocation factors is linear.

\end{rem}

\section{Weighted Sum Rate Formulation for Secrecy }\label{sec:III}

We consider the subcases of the three scenarios of the MIMO-NOMA without a common message ($M_0=\emptyset$) in this section. A weighted sum rate maximization based on BSMM \cite{park2015weighted, razaviyaynunified} is generalized to all  scenarios.
The WSR maximization for the MIMO-NOMA with private and confidential messages under a total power constraint is formulated as
\begin{align} \label{reformuP1}
\varphi(P)=&\max \limits_{\mathbf{Q}_k \succcurlyeq\mathbf{0}}
\sum_{k} w_k R_{k}, \quad k=1,2\notag \\
&{\rm s.t.} \quad {\rm tr}(\mathbf{Q}_k) \leq P,
\end{align}
where $R_{k} := R_{kp}$ in Scenario~A, $R_{1} := R_{1c}$ and $R_{2} := R_{2p}$ in  Scenario~B, and $R_{k} := R_{kc}$ in  Scenario~C. $w_k \geq 0$ is a weight. 
The Lagrangian of the problem  \eqref{reformuP1} is
\begin{align} \label{lagr}
L(\mathbf{Q}_1, \mathbf{Q}_2, \lambda)=w_1R_{1} + w_2R_{2}-\lambda({\rm tr}(\mathbf{Q}_1+\mathbf{Q}_2)-P),
\end{align}
where $\lambda$ is the Lagrange multiplier related to the total power constraint.
The dual function is a maximization of the Lagrangian 
\begin{align} \label{dual}
g(\lambda)  = \max \limits_{\mathbf{Q}_k \succcurlyeq\mathbf{0}} L(\mathbf{Q}_1, \mathbf{Q}_2, \lambda),
\end{align}
and the dual problem is given by
\begin{align} \label{dualp}
\min \limits_{\lambda \geq 0}g(\lambda){ .}
\end{align}
\begin{lem} \label{lemma1}
	The problem in \eqref{reformuP1} has zero duality gap and the KKT conditions are necessary  for the optimal solution.  
\end{lem}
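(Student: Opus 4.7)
The plan is to establish zero duality gap via Slater's condition combined with a hidden-convexity argument tailored to each scenario, and then to derive KKT necessity as a corollary.

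First, I would verify strict feasibility. The primal constraint set $\{(\mathbf{Q}_1,\mathbf{Q}_2) : \mathbf{Q}_k \succcurlyeq \mathbf{0},\ {\rm tr}(\mathbf{Q}_1+\mathbf{Q}_2) \leq P\}$ is convex, and the point $\mathbf{Q}_1 = \mathbf{Q}_2 = \frac{P}{4 n_t}\mathbf{I}$ gives ${\rm tr}(\mathbf{Q}_1+\mathbf{Q}_2) = P/2 < P$, so Slater's condition holds. This is the straightforward part and also supplies the constraint qualification needed for KKT necessity.

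The main obstacle is that the objective $w_1 R_1 + w_2 R_2$ is nonconvex in every scenario: in Scenario~A, $R_{2p}$ depends on $\mathbf{Q}_1$ through the difference of $\log\det$ terms hidden in \eqref{eq: mathmodel1_confi2}; in Scenarios~B and~C the secrecy rates $R_{1c}$ and $R_{2c}$ are DC functions by construction, as seen in \eqref{eq: mathmodel2_confi1} and \eqref{eq: mathmodel3_confi2}. Slater's condition alone therefore does not yield zero duality gap, and closing this gap is the real work. I would exploit the hidden convexity of each scenario: for Scenario~A, invoke BC--MAC duality \cite{vishwanath2003duality, weingartens2006capacity}, which converts the WSR maximization of the MIMO-BC under the sum-power constraint into a \emph{convex} WSR maximization on the dual MAC; strong duality holds on the MAC side, and the MAC--BC covariance mapping transports the optimal multiplier back to give zero duality gap for \eqref{reformuP1}. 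For Scenarios~B and~C I would adapt the saddle-point construction in the spirit of \cite{park2015weighted}: pick a dual optimal $\lambda^\star \geq 0$, and show that the inner maximization in \eqref{dual} is attained at some $(\mathbf{Q}_1^\star,\mathbf{Q}_2^\star)$ which is primal feasible with the trace constraint active and which matches the primal optimum, using monotonicity of $\log\det$ in the PSD ordering and compactness of the truncated feasible set.

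Once zero duality gap is in hand, the rest is routine: strict feasibility together with zero duality gap implies that $(\mathbf{Q}_1^\star,\mathbf{Q}_2^\star,\lambda^\star)$ is a saddle point of the Lagrangian \eqref{lagr}, from which the KKT conditions -- stationarity, primal feasibility, dual feasibility $\lambda^\star \geq 0$, and complementary slackness $\lambda^\star({\rm tr}(\mathbf{Q}_1^\star+\mathbf{Q}_2^\star)-P)=0$ -- follow by a standard Lagrangian argument. The hardest step, as noted, is the hidden-convexity/saddle-point argument in the secrecy scenarios; everything downstream is bookkeeping.
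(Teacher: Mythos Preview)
Your Slater verification and the overall structure are fine, and for Scenario~A both you and the paper rely on an underlying convex reformulation (the paper just says the problem ``can be transferred as a convex problem satisfying Slater's condition''; BC--MAC duality is one concrete way to make that precise). For Scenario~C the paper simply cites \cite[Theorem~1]{park2015weighted}, so there your approaches align.

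The real divergence is Scenario~B. The paper does \emph{not} build a saddle point directly. Instead it invokes the \emph{time-sharing property} of Yu--Lui \cite{yu2006dual}: if the optimal value $\varphi(P)$ is concave in the power budget $P$, then the duality gap is zero regardless of convexity of the objective. Concavity of $\varphi(P)$ follows because the secrecy capacity region of Scenario~B under a total-power constraint is already convex (it equals the union over all covariance constraints \cite[Corollary~1]{goldfeld2019mimo}, and the convex-hull/time-sharing operation does not enlarge it). This is essentially a one-line reduction to a known theorem, after which KKT necessity also drops out of \cite{yu2006dual}.

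Your proposed route---pick a dual-optimal $\lambda^\star$ and argue that the Lagrangian maximizer is primal feasible with the trace constraint active---is where I see a gap. For a nonconvex objective, nothing in ``monotonicity of $\log\det$ in the PSD ordering'' or ``compactness of the truncated feasible set'' forces the unconstrained maximizer of \eqref{dual} to land on ${\rm tr}(\mathbf{Q}_1+\mathbf{Q}_2)=P$, or even to coincide with the primal optimum; that coincidence is precisely what fails when a duality gap is present. To close this you would end up proving that $\varphi$ is concave in $P$ (so that a supporting hyperplane at $P$ furnishes the desired $\lambda^\star$ and matching primal point), which is exactly the time-sharing argument the paper uses. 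In other words, your saddle-point construction is not an alternative to time-sharing here---it presupposes it. I would replace the Scenario~B/C sketch with the time-sharing argument: it is shorter, it is what actually does the work, and it is how \cite{park2015weighted} handles Scenario~C as well.
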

\textit{Proof:} See the proof in Appendix A.

Since the problem in \eqref{dual} is a nonconvex problem in any  secrecy transmission, the BSMM \cite{razaviyaynunified, park2015weighted} can be considered which alternatively updates covariance matrix by maximizing a set of strictly convex local approximations. Specifically, Scenario~C has been studied in \cite{park2015weighted}. We discuss Scenario~A and Scenario~B in this paper.

\subsection{Scenario A}
It is worth noting that the MAC-BC  duality \cite{vishwanath2003duality} is applied to WSR maximization in \cite{viswanathan2003downlink} where the WSR on MAC rate region can be transformed to BC rate region by an iterative algorithm. Then, the WSR can be solved using  convex optimization. Once the optimum uplink covariance matrices are determined by any standard convex optimization tool, the equivalent downlink covariance matrices can be obtained through the duality transformation \cite{vishwanath2003duality}. The optimization in MAC requires a descent algorithm over a line search with a tolerance. It also mentions that the DPC rate region is difficult to compute without employing duality \cite{vishwanath2003duality}. Yet in this paper, we provide an alternative solution without applying the MAC-BC duality. We form a WSR for the DPC rate region directly and solve the maximization by using BSMM.

Since \eqref{dualp} is  nonconvex, we can apply  BSMM which updates covariance matrices by successively optimizing the lower bound of local approximation of
$f(\mathbf{Q}_1, \mathbf{Q}_2) = L(\mathbf{Q}_1, \mathbf{Q}_2, \lambda)$ \cite{razaviyaynunified, park2015weighted}. 
Rewrite $f(\mathbf{Q}_1, \mathbf{Q}_2)$ into the summation of one convex and one concave functions
\begin{align} \label{conv_conca}
f(\mathbf{Q}_1, \mathbf{Q}_2) =f_{1}(\mathbf{Q}_1) + f_{2}(\mathbf{Q}_1,\mathbf{Q}_2),
\end{align}
in which
\begin{subequations}\label{conv_conc1}
	\begin{align} 
	f_{1}(\mathbf{Q}_1)  = &{\frac{w_1}{2}} \log|\mathbf{I} +\mathbf{H}_1 \mathbf{Q}_1
	\mathbf{H}_1^T| - \lambda{\rm tr}(\mathbf{Q}_1) \\
	f_{2}(\mathbf{Q}_1,\mathbf{Q}_2)  =
	& {\frac{w_2}{2}}\log|\mathbf{I} + ({\mathbf{I} +\mathbf{H}_2 \mathbf{Q}_1 	\mathbf{H}_2^T})^{-1}\mathbf{H}_2 \mathbf{Q}_2 \mathbf{H}_2^T |   \notag \\
	&- \lambda({\rm tr}(\mathbf{Q}_2)-P).\label{sbcon_1}
	\end{align}
\end{subequations}
$f_{1}(\mathbf{Q}_1)$ is a concave function of  $\mathbf{Q}_1$, $f_{2}(\mathbf{Q}_1, \mathbf{Q}_2)$ is convex over $\mathbf{Q}_1$ by fixing  $\mathbf{Q}_2$.
After the decomposition, we can alternatively optimize $\mathbf{Q}_1$ and $\mathbf{Q}_2$ to find a lower bound for the weighted sum rate. For the $i$th iteration, the function for $f_{2}(\mathbf{Q}_1, \mathbf{Q}_1^{(i-1)})$ is lower-bounded by its first-order Taylor approximation \cite{boyd2004convex}
\begin{align} \label{Talyor}
f_{2}(\mathbf{Q}_1,\mathbf{Q}^{(i-1)}_2) \geq& f_{2}(\mathbf{Q}^{(i-1)}_1,\mathbf{Q}^{(i-1)}_2) \notag \\ - &{\rm {tr}}[\mathbf{A}	(\mathbf{Q}_1 - \mathbf{Q}^{(i-1)}_1)]
\end{align}
in which the power price matrix is a negative partial derivative with respect to $\mathbf{Q}_1$
\begin{align}
\mathbf{A} =& -\triangledown_{\mathbf{Q}_1} f_{2}(\mathbf{Q}_1^{(i-1)}, \mathbf{Q}^{(i-1)}_2)     \notag \\
=&-\frac{w_2}{\ln 2} \mathbf{H}_2^T (\mathbf{I} +\mathbf{H}_2( \mathbf{Q}_1^{(i-1)}+\mathbf{Q}^{(i-1)}_2)
\mathbf{H}_2^T)^{-1}\mathbf{H}_2\notag \\
&+\frac{w_2}{\ln 2} \mathbf{H}_2^T (\mathbf{I} +\mathbf{H}_2 (\mathbf{Q}^{(i-1)}_1)
\mathbf{H}_2^T)^{-1}\mathbf{H}_2.  \label{eq:A1}
\end{align}
Then the problem is lower bounded as
\begin{align} \label{reformuP21}
f(\mathbf{Q}_1, \mathbf{Q}^{(i-1)}_2) &\geq f_{1}(\mathbf{Q}_1) + f_{2}(\mathbf{Q}^{(i-1)}_1,\mathbf{Q}^{(i-1)}_2) \notag \\ & - {\rm {tr}}[\mathbf{A}	(\mathbf{Q}_1 - \mathbf{Q}^{(i-1)}_1)]{ .}
\end{align}
Then, we optimize the right-hand side of \eqref{reformuP21} by omitting the constant terms, which is equivalent as 
\begin{align} \label{reformuP31}
\mathbf{Q}^{(i)}_1= {\rm arg}\max \limits_{\mathbf{Q}_1} \; {\frac{w_1}{2}} \log|\mathbf{I} +\mathbf{H}_1 \mathbf{Q}_1
\mathbf{H}_1^T| - {\rm {tr}}[(\lambda \mathbf{I} + \mathbf{A})\mathbf{Q}_1]. 
\end{align}
Next, we optimize $f(\mathbf{Q}^{(i)}_1, \mathbf{Q}_2)$ by fixing $\mathbf{Q}^{(i)}_1$, which is equivalent as 
\begin{align} \label{reformuP41}
\mathbf{Q}^{(i)}_2= {\rm arg }\max \limits_{\mathbf{Q}_2} \; &{\frac{w_2}{2}}\log|\mathbf{I} + ({\mathbf{I} +\mathbf{H}_2 \mathbf{Q}^{(i)}_1 	\mathbf{H}_2^T})^{-1}\mathbf{H}_2 \mathbf{Q}_2 \mathbf{H}_2^T |   \notag \\
&- \lambda{\rm tr}(\mathbf{Q}_2).
\end{align}
The optimal solution for \eqref{reformuP31} and \eqref{reformuP41}  can be achieved by the following lemma \cite{park2015weighted}.

\begin{lem}{\cite{park2015weighted}} \label{lemopt}
	For some $\mathbf{S} \succ \mathbf{0}$, the optimal solution of the problem 
	\begin{align}
	\max_{\mathbf{Q}\succcurlyeq \mathbf{0}} w \log|\mathbf{I}+ \mathbf{R}^{-1}\mathbf{H}\mathbf{Q}\mathbf{H}^{T}|-{\rm tr}(\mathbf{SQ})
	\end{align} is given by 
	\begin{align}
	\mathbf{Q}^{*} = \mathbf{S}^{-1/2}\mathbf{V}\mathbf{\Lambda}\mathbf{V}^{T}\mathbf{S}^{-1/2}.
	\end{align}
		To use Lemma \ref{lemopt}, we set  $w= \frac{w_1}{2}$, $\mathbf{S}=\lambda \mathbf{I} + \mathbf{A}$, and $\mathbf{R}=\mathbf{I}$  for  \eqref{reformuP31};   and  $w= \frac{w_2}{2}$, $\mathbf{S}=\lambda \mathbf{I}$, and $\mathbf{R}={\mathbf{I} +\mathbf{H}_2 \mathbf{Q}^{(i)}_1 	\mathbf{H}_2^T}$ for \eqref{reformuP41}. $\mathbf{V}$, $\mathbf{U}$, and $\mathbf{\Lambda}$ are obtained by eigenvalue decomposition of  $\mathbf{R}^{-1/2}\mathbf{H}\mathbf{S}^{-1/2}=\mathbf{U}{\rm diag}(\sigma_1, \sigma_2,\cdots,\sigma_m)\mathbf{V}^{T}$, $\sigma_i \geq 0$, $\forall i$, $\mathbf{\Lambda	}= {\rm diag}[(w -1/\sigma_1^2)^{+}, \cdots, (w -1/\sigma_m^2)^{+}]$, and $(x)^{+}=\max(x,0)$.
	
\end{lem}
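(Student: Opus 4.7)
The plan is to reduce the problem to the standard point-to-point MIMO capacity form via two whitening transformations, after which a classical water-filling argument gives the stated closed form.

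First, since $\mathbf{S}\succ\mathbf{0}$, I would substitute $\mathbf{Q}=\mathbf{S}^{-1/2}\tilde{\mathbf{Q}}\mathbf{S}^{-1/2}$, which is a bijection on the positive semidefinite cone. By cyclicity of trace, $\mathrm{tr}(\mathbf{S}\mathbf{Q})=\mathrm{tr}(\tilde{\mathbf{Q}})$. For the log-determinant, the identity $|\mathbf{I}+\mathbf{R}^{-1}\mathbf{M}|=|\mathbf{I}+\mathbf{R}^{-1/2}\mathbf{M}\mathbf{R}^{-1/2}|$ (a consequence of $|\mathbf{I}+\mathbf{A}\mathbf{B}|=|\mathbf{I}+\mathbf{B}\mathbf{A}|$) applied with $\mathbf{M}=\mathbf{H}\mathbf{Q}\mathbf{H}^T$ reduces the objective to
\begin{equation*}
w\log\bigl|\mathbf{I}+\tilde{\mathbf{H}}\tilde{\mathbf{Q}}\tilde{\mathbf{H}}^T\bigr|-\mathrm{tr}(\tilde{\mathbf{Q}}),\qquad \tilde{\mathbf{H}}\triangleq\mathbf{R}^{-1/2}\mathbf{H}\mathbf{S}^{-1/2},
\end{equation*}
i.e., a Lagrangian-form P2P MIMO problem with whitened channel, unit noise covariance, and unit power price.

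Next, take the SVD $\tilde{\mathbf{H}}=\mathbf{U}\,\mathrm{diag}(\sigma_1,\ldots,\sigma_m)\,\mathbf{V}^T$ and rotate the optimization variable by $\tilde{\mathbf{Q}}'\triangleq\mathbf{V}^T\tilde{\mathbf{Q}}\mathbf{V}$. The trace is rotation-invariant, and applying Sylvester's identity once more reduces the log-determinant to $w\log|\mathbf{I}+\mathbf{\Sigma}\tilde{\mathbf{Q}}'\mathbf{\Sigma}|$ with $\mathbf{\Sigma}=\mathrm{diag}(\sigma_1,\ldots,\sigma_m)$. Hadamard's inequality then forces an optimal $\tilde{\mathbf{Q}}'$ to be diagonal, decoupling the problem into scalar problems $\max_{q_i\geq 0}\;w\log(1+\sigma_i^2 q_i)-q_i$. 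Zeroing the derivative and projecting onto $[0,\infty)$ yields the water-filling level $q_i=(w-1/\sigma_i^2)^{+}$, so $\tilde{\mathbf{Q}}'=\mathbf{\Lambda}$; unwinding the two changes of variables produces $\mathbf{Q}^{*}=\mathbf{S}^{-1/2}\mathbf{V}\mathbf{\Lambda}\mathbf{V}^T\mathbf{S}^{-1/2}$ as claimed.

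The step that needs care is Hadamard in the degenerate case: when $\tilde{\mathbf{H}}$ has repeated singular values the SVD basis is not unique, and I would justify diagonal optimality by noting that any unitary rotation within a singular eigenspace leaves both the trace term and the log-determinant unchanged, so a diagonal representative in any admissible basis is still optimal. Strict positive definiteness of $\mathbf{S}$ is essential for the initial change of variables to be invertible; weakening it to $\mathbf{S}\succcurlyeq\mathbf{0}$ would require a separate limiting argument that I do not expect to pursue here.
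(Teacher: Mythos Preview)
Your argument is correct and is exactly the standard route: whiten the penalty via $\mathbf{Q}=\mathbf{S}^{-1/2}\tilde{\mathbf{Q}}\mathbf{S}^{-1/2}$, whiten the noise via Sylvester to obtain the effective channel $\tilde{\mathbf{H}}=\mathbf{R}^{-1/2}\mathbf{H}\mathbf{S}^{-1/2}$, diagonalize by its SVD, invoke Hadamard, and water-fill. The paper does not supply its own proof of this lemma; it is quoted directly from \cite{park2015weighted}, so there is no in-paper argument to compare against, and your derivation is precisely the one underlying that reference.
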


\subsection{Scenario B}
In Scenario B, what makes it different from Scenario~A is the formulation of convex and concave functions, which can be written as
\begin{subequations}\label{conv_conca1}
	\begin{align} 
	f_{1}(\mathbf{Q}_1)  = &{\frac{w_1}{2}} \log|\mathbf{I} +\mathbf{H}_1 \mathbf{Q}_1
	\mathbf{H}_1^T| - \lambda{\rm tr}(\mathbf{Q}_1) \\
	f_{2}(\mathbf{Q}_1,\mathbf{Q}_2)  =& -{\frac{w_1}{2}} \log|\mathbf{I} +\mathbf{H}_2 \mathbf{Q}_1 
	\mathbf{H}_2^T|\notag \\
	&+ {\frac{w_2}{2}}\log|\mathbf{I} + ({\mathbf{I} +\mathbf{H}_2 \mathbf{Q}_1 	\mathbf{H}_2^T})^{-1}\mathbf{H}_2 \mathbf{Q}_2 \mathbf{H}_2^T |   \notag \\
	&- \lambda({\rm tr}(\mathbf{Q}_2)-P).\label{sbconv_1}
	\end{align}
\end{subequations}
$f_{1}(\mathbf{Q}_1)$ is also a concave function of  $\mathbf{Q}_1$, $f_{2}(\mathbf{Q}_1, \mathbf{Q}_2)$ is convex by fixing  $\mathbf{Q}_2$ because the second term in \eqref{sbconv_1} is convex over $\mathbf{Q}_1$. For the $i$th iteration, the function for $f_{2}(\mathbf{Q}_1, \mathbf{Q}_1^{(i-1)})$ is lower-bounded by its first-order Taylor approximation as the expression in \eqref{Talyor},
in which the power price matrix
\begin{align} 
\mathbf{A} =& -\triangledown_{\mathbf{Q}_1} f_{2}(\mathbf{Q}_1^{(i-1)}, \mathbf{Q}^{(i-1)}_2)     \notag \\
=&\frac{w_1+ w_2}{2\ln 2} \mathbf{H}_2^T (\mathbf{I} +\mathbf{H}_2 \mathbf{Q}_1^{(i-1)}
\mathbf{H}_2^T)^{-1}\mathbf{H}_2\notag \\
&-\frac{w_2}{2\ln 2} \mathbf{H}_2^T (\mathbf{I} +\mathbf{H}_2 (\mathbf{Q}^{(i-1)}_1+\mathbf{Q}^{(i-1)}_2)
\mathbf{H}_2^T)^{-1}\mathbf{H}_2.  \label{eq:A}
\end{align}
Finally, we optimize the right-hand side of \eqref{reformuP21} with the power price matrix in \eqref{eq:A}, which is equivalent as 
\begin{align} \label{reformuP3}
\mathbf{Q}^{(i)}_1= {\rm arg}\max \limits_{\mathbf{Q}_1} \; {\frac{w_1}{2}} \log|\mathbf{I} +\mathbf{H}_1 \mathbf{Q}_1
\mathbf{H}_1^T| - {\rm {tr}}[(\lambda \mathbf{I} - \mathbf{A})\mathbf{Q}_1]. 
\end{align}
Next, we optimize $L(\mathbf{Q}^{(i)}_1, \mathbf{Q}_2)$ by fixing $\mathbf{Q}^{(i)}_1$, which is equivalent as 
\begin{align} \label{reformuP4}
\mathbf{Q}^{(i)}_2= {\rm arg }\max \limits_{\mathbf{Q}_2} \; &{\frac{w_2}{2}}\log|\mathbf{I} + ({\mathbf{I} +\mathbf{H}_2 \mathbf{Q}^{(i)}_1 	\mathbf{H}_2^T})^{-1}\mathbf{H}_2 \mathbf{Q}_2 \mathbf{H}_2^T |   \notag \\
&- \lambda{\rm tr}(\mathbf{Q}_2).
\end{align}

\begin{algorithm}[t]
	\caption{WSR maximization for all three scenarios without a common message}\label{alg:WSRalgorithm}
	\begin{algorithmic}[1]
		\STATE	inputs:  $\lambda^{\max}$, $\lambda^{\min}$,	$\epsilon_2$, 
		$\epsilon_3$, { secrecy scenario $L \in \{\textmd{A}, \textmd{B}, \textmd{C}\}$};
		\WHILE {$\lambda^{\max}-\lambda^{\min} > \epsilon_2$}
		\STATE$\lambda:=(\lambda^{\max}+\lambda^{\min})/2$;
		\STATE $\mathbf{Q}_1^{(0)}:=\mathbf{Q}_2^{(0)}:=\frac{P}{2n_t}\mathbf{I}$;
		\STATE $R^{(0)}:=0$;
		\STATE $i=0$;
		\WHILE 1
		\STATE $i=i+1$;
		\SWITCH {$L$}
		\CASE {$\textmd{A}$}
		\STATE Solve $\mathbf{Q}_1^{(i)}$ and $\mathbf{Q}_2^{(i)}$ in 
		\eqref{reformuP31}-\eqref{reformuP41} using 
		Lemma~\ref{lemopt};
		\STATE Compute $R_{1}$ and $R_{2}$ in \eqref{eq: mathmode1};
		\ENDCASE
		\CASE {$\textmd{B}$}
		\STATE Solve $\mathbf{Q}_1^{(i)}$ and $\mathbf{Q}_2^{(i)}$ in 
		\eqref{reformuP3}-\eqref{reformuP4} using 
		Lemma~\ref{lemopt};
		\STATE Compute $R_{1}$ and $R_{2}$ in \eqref{eq: mathmode2};
		\ENDCASE
		\CASE {$\textmd{C}$}
		\STATE Solve $\mathbf{Q}_1^{(i)}$ and 
		$\mathbf{Q}_2^{(i)}$ in 
		\cite[Algorithm 1, lines 5-13]{park2015weighted};
		\STATE Obtain $R_{1}$ and $R_{2}$ in \eqref{eq: mathmodel3};
		\ENDCASE
		\ENDSWITCH
		\STATE $R^{(i)}:=w_1R_{1}+ w_2R_{2}$ 
		\IF {${{\rm abs}(R^{(i)} - R^{(i-1)})} < \epsilon_3$}
		\STATE break;
		\ENDIF
		\IF {${\rm tr}(\mathbf{Q}_1^{(i)} + \mathbf{Q}_2^{(i)}) < P$}
		\STATE $\lambda^{\max}:=\lambda$;
		\ELSE
		\STATE $\lambda^{\min}:=\lambda$;
		\ENDIF
		\ENDWHILE
		\ENDWHILE
		\STATE outputs: $\lambda^{*}:=\lambda$, $R_{k}^{*}:=R_{k}$, and
		$\mathbf{Q}_k^{*}=\mathbf{Q}_k^{(i)}$, $k\in\{1,2\}$.
	\end{algorithmic}
\end{algorithm}

\begin{figure}[tb]
	\centering
	\includegraphics[width=0.38\textwidth]{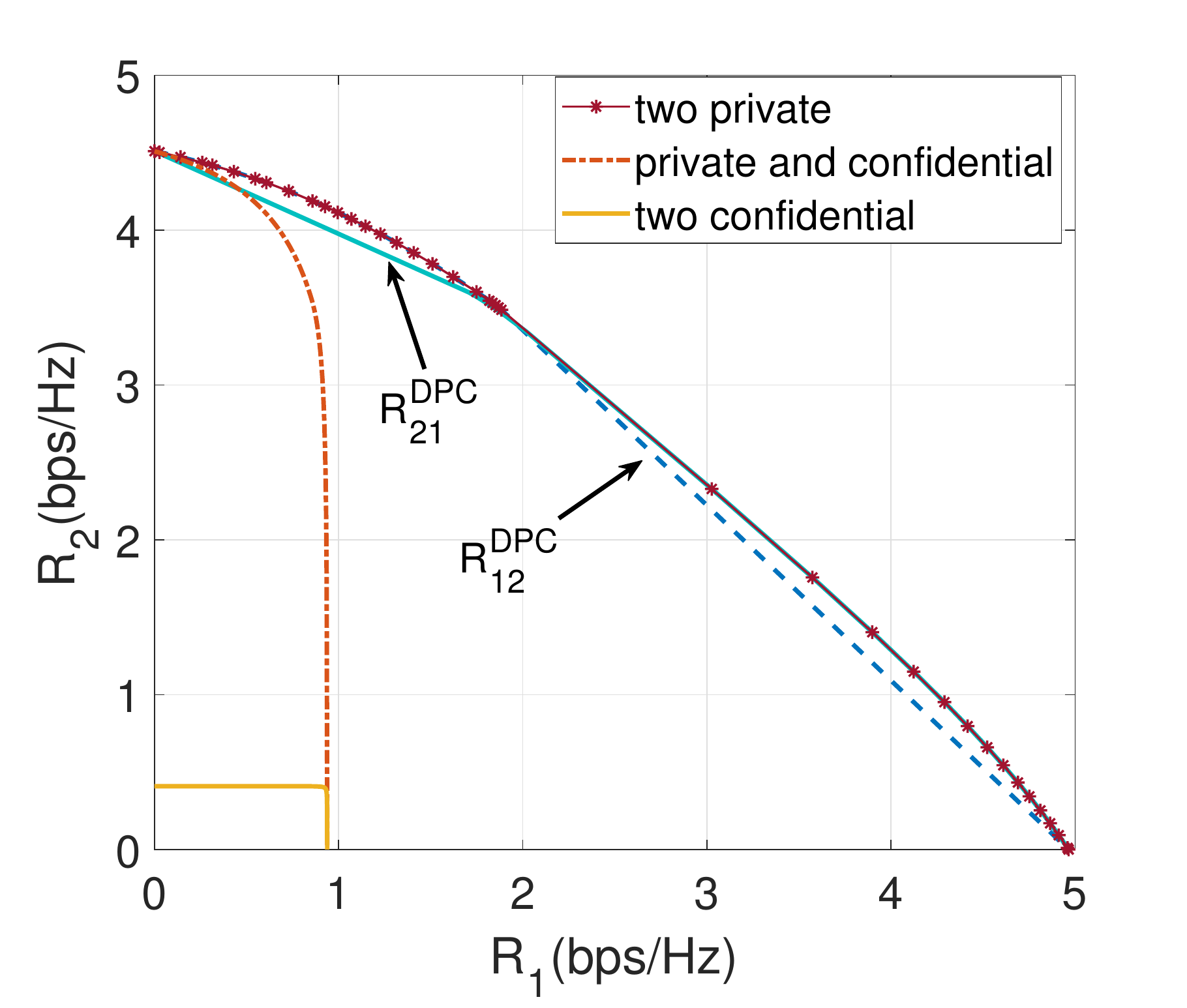}
	\caption{Secrecy capacity regions of three scenarios under an average total power
		constraint without common message over the channel $H_1=[0.3 \; 2.5;2.2 \; 
		1.8]$ and $H_2=[1.3 \; 1.2;1.5 \; 3.9]$, and $P=12$.}
	\label{fig:2}
\end{figure}
The WSR maximization for all scenarios without common message is summarized in Algorithm~\ref{alg:WSRalgorithm}. $\epsilon_2$ and $\epsilon_3$ are the bisection search accuracy and convergence tolerance of BSMM, respectively.	If $w_1=0$ and $w_2=1$, the problem reduces to a P2P MIMO with an analytical solution. Algorithm \ref{alg:WSRalgorithm} becomes a WF regime.
If $w_1=1$ and $w_2=0$, then problem in \eqref{reformuP1} is the secrecy rate maximization over MIMO wiretap channel. Then, Algorithm \ref{alg:WSRalgorithm} is nothing but AOWF \cite{li2013transmit}.
WSR maximization using BSMM can solve the special cases of the secrecy capacity regions in all three scenarios. However, it is hard to extend directly to the general cases of the three scenarios, because the max-min problem of multicasting is not derivable in BSMM although the multicasting problem owns convexity. 
	Thus, we propose a power splitting method for the general cases in Section~\ref{sec:IV}.

	Encoding order for secrecy messages in different scenarios is distinguished.
In Scenario A, the weight determines the optimal encoding order. For example, if $w_1 > w_2$, the optimal encoding order is encoding for user 1 of private message $M_{1p}$ first and then private message $M_{2p}$ for user 2 is encoded last. The capacity region is taken over all permutations of the users' order \cite{weingarten2004capacity}. In Scenario B, the entire capacity region is using DPC to cancel the signal of the private	message $M_{2p}$ intended for user 2 at user 1 only. 
The other variant which treats the private message $M_{2p}$ of user 2 as interference for user~1 is unnecessary \cite[Remark 4]{goldfeld2019mimo}.
 In Scenario C, the S-DPC owns the invariant property that the
achievable rate region is irrelevant with respect to the encoding order \cite{Hung2013Liu}. 

The three scenarios without common messages differentiate the security requirements. The achievable region is limited with a higher secrecy requirement. For comparison, we show an example in Fig.~\ref{fig:2} with the same channel settings as \cite{liu2010multiple, goldfeld2019mimo}. First, when the secrecy message of user 2 is zero, i.e., $M_{2p}=0$ in Scenario B and  $M_{2c}=0$ in Scenario~C, the maximal achieving rates for user 1 in the two cases are the same, and the two problems drop to the Gaussian wiretap channel. Second, when the secrecy message of user 1 is zero, i.e., $M_{1p}=0$ in Scenario~A and  $M_{1c}=0$ in  Scenario~B, the achieving rates for user 2 in the two cases become the same P2P MIMO problem.
Third, imposing a secrecy constraint on two users in  Scenario~C strictly shrinks the capacity region compared with  Scenario~A.

	\begin{rem}[Complexity]
		The number of iterations of the BSMM is $\mathcal{O}(1/\epsilon_3)$, and the bisection search requires  $\mathcal{O}(\log(1/\epsilon_2)$. The weighted sum rate of Algorithm~\ref{alg:WSRalgorithm}  has the complexity of $\mathcal{O}(\frac{m^3}{\sigma \epsilon_3 }\log({1}/{ \epsilon_2}))$ with a search step $\sigma$ over the weight \cite{park2015weighted, qi2020secure}.
		On the other hand, the computation complexity of Algorithm~\ref{alg:algorithmPS} without common messages is $\mathcal{O}(\frac{m^3+n_t^4}{\epsilon_1})$ with only one layer of search loop over $\alpha_1$. 	
	\end{rem}


\begin{figure*}[t] 
	\centering
	\subfigure[{ Scenario A}]{
		\includegraphics[scale=.243]{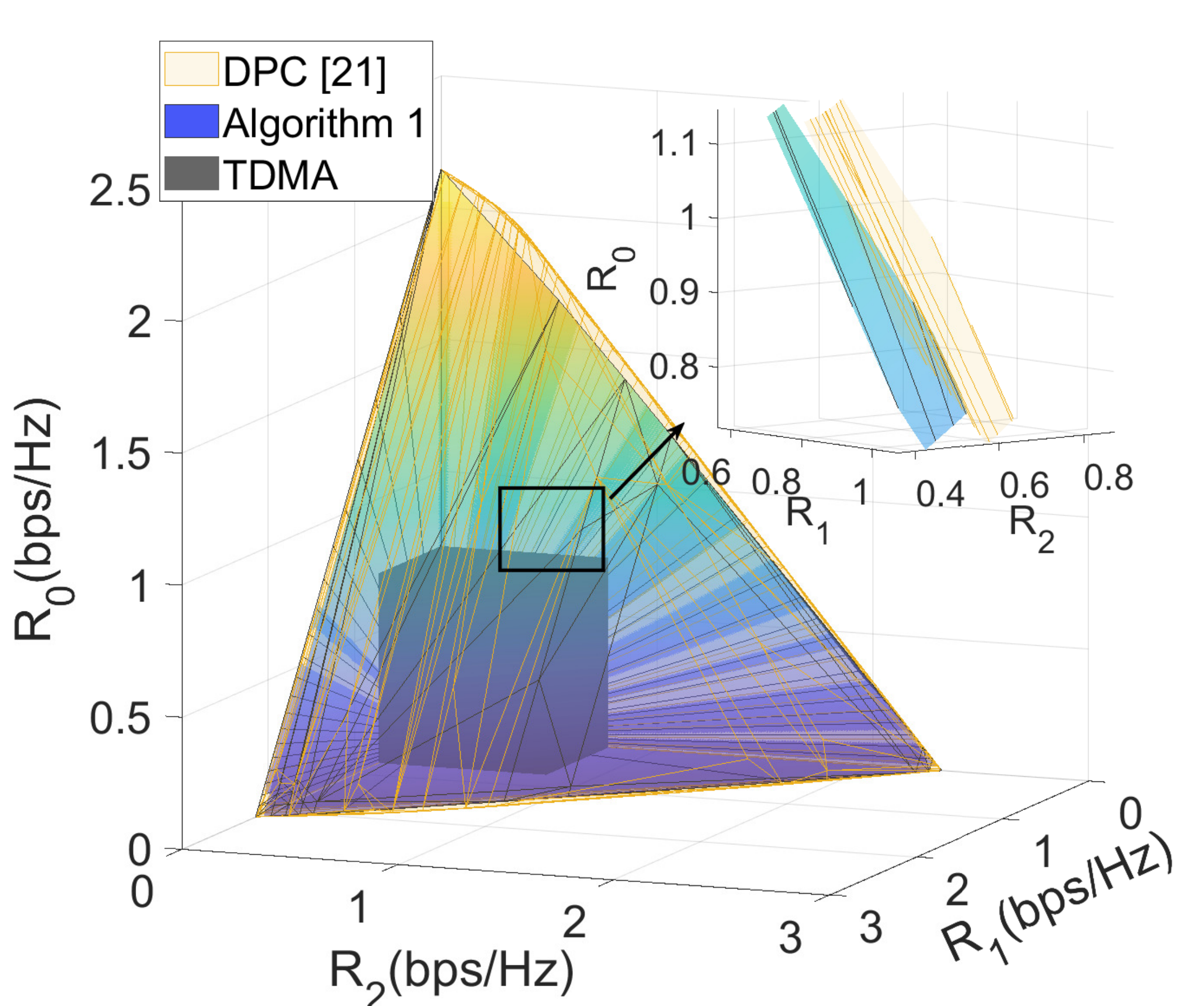}
	}
	\subfigure[{ Scenario B}]{
		\includegraphics[scale=.243]{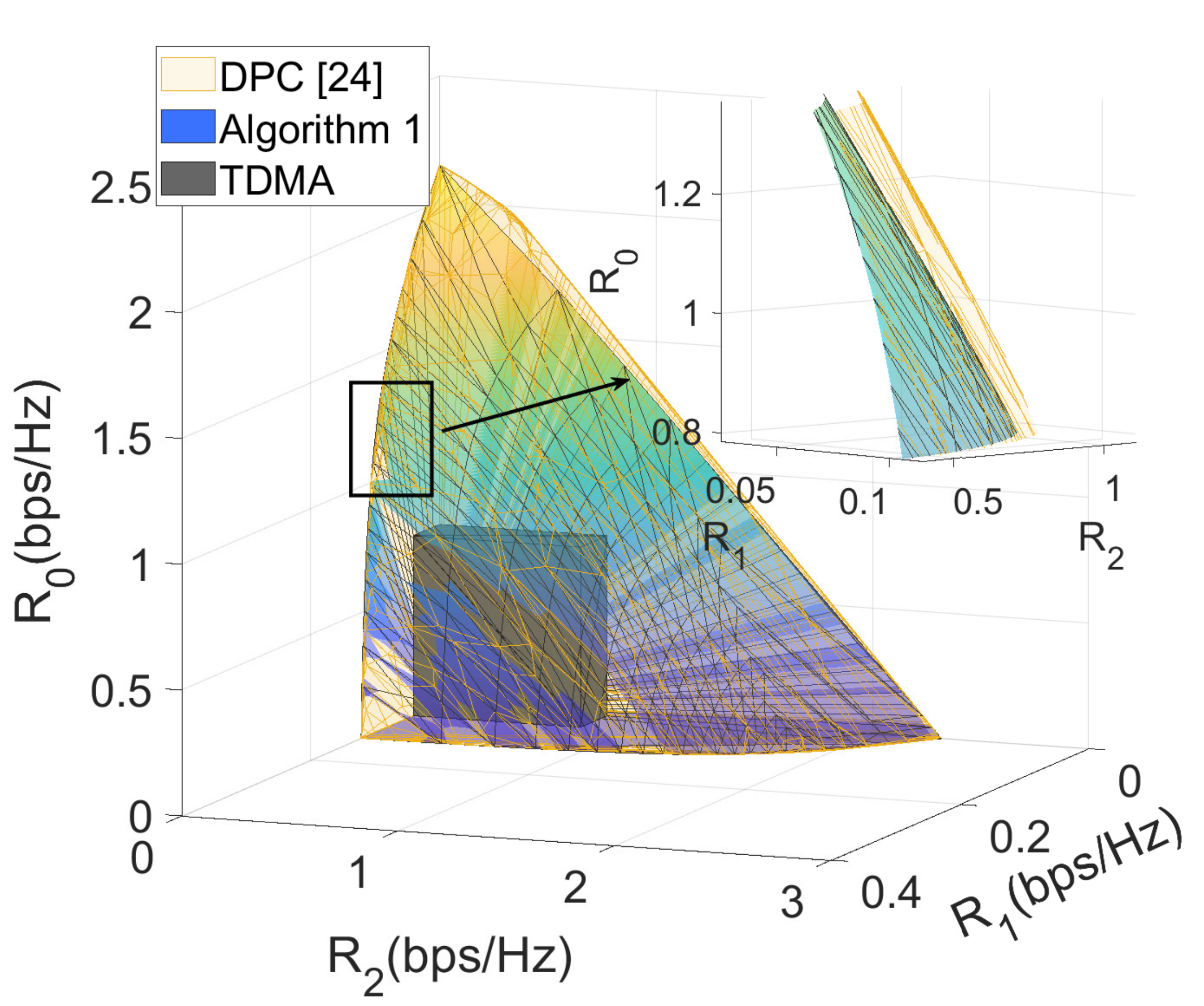}
	}
	\subfigure[{ Scenario C}]{
		\includegraphics[scale=.243]{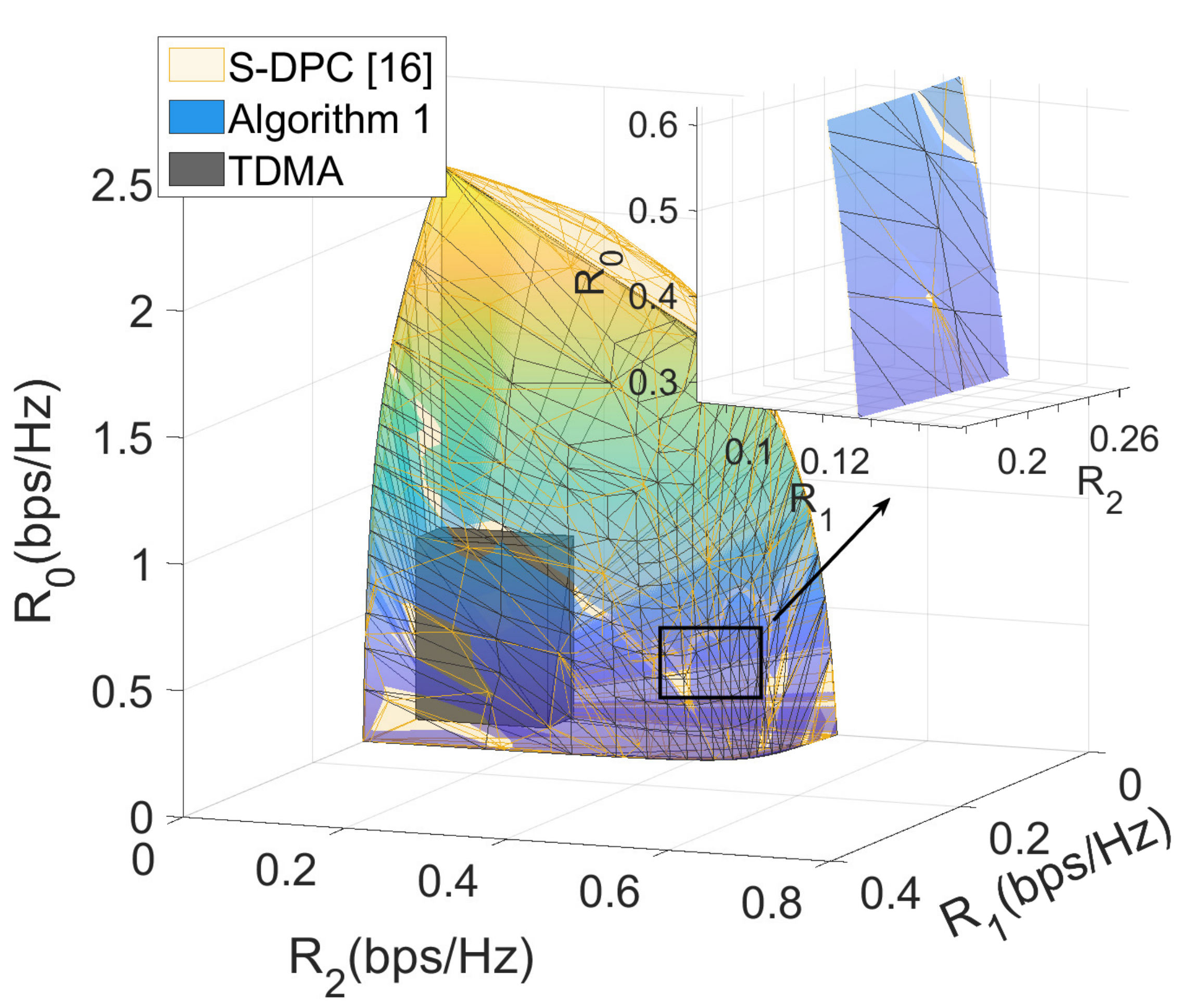}
	}
	\caption{Secrecy rate regions of MIMO-NOMA with different { scenarios of security} ($n_t=n_1=n_2=2$, and $P=10$). The yellow curved mesh is the secrecy capacity region, the colorful surface denotes the achievable rate region realized by Algorithm~\ref{alg:algorithmPS}, and the TDMA (gray cube) is achieved via three orthogonal time slots.}
	\label{fig:3levels}
\end{figure*}

\begin{figure*}[t] 
	\centering
	\subfigure[Scenario A]{
		\includegraphics[scale=.279]{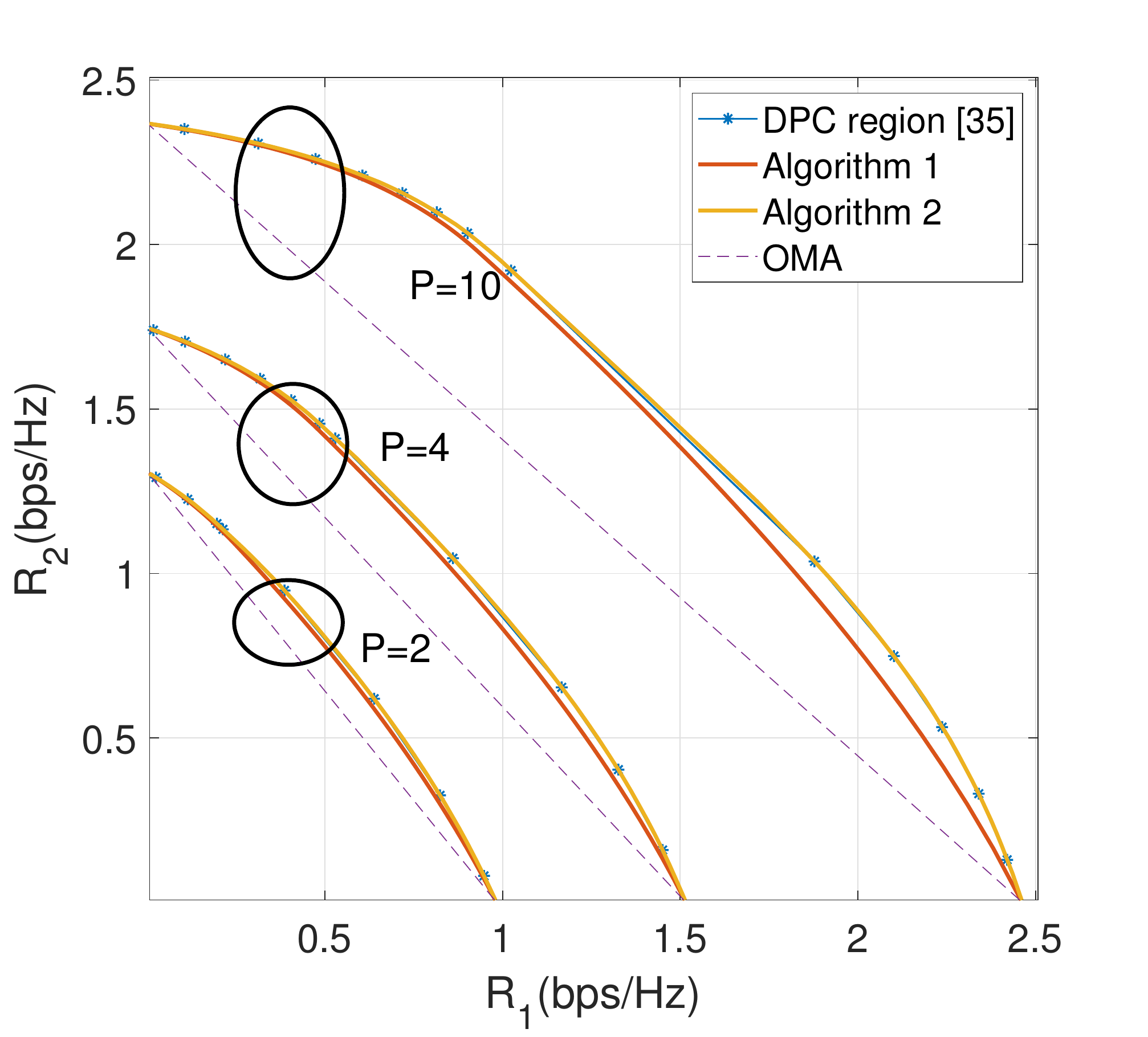}
				\label{fig:levelnoma1}
	}
	\subfigure[Scenario B]{
		\includegraphics[scale=.279]{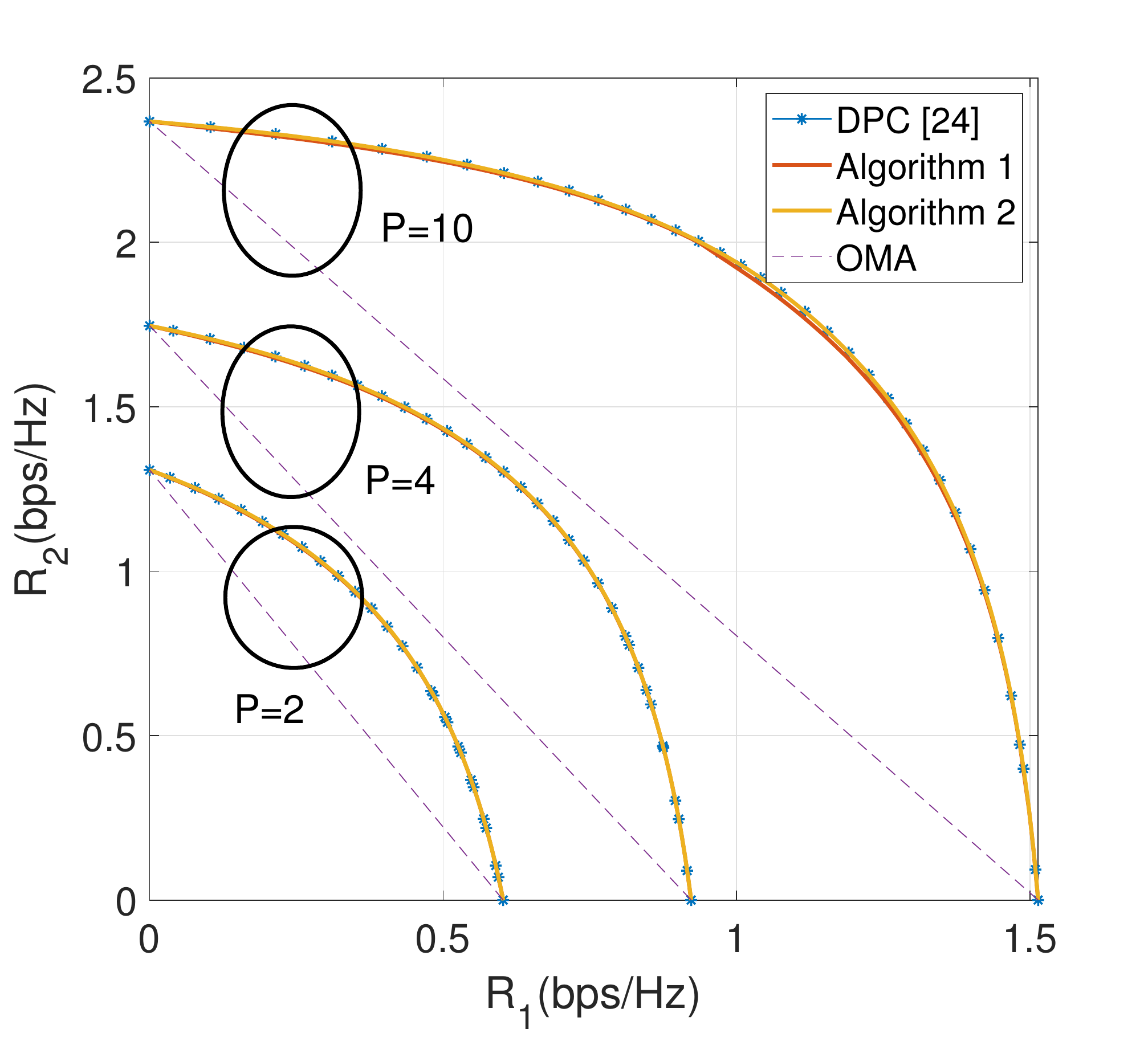}
				\label{fig:levelnoma2}
	}
	\subfigure[Scenario C]{
		\includegraphics[scale=.279]{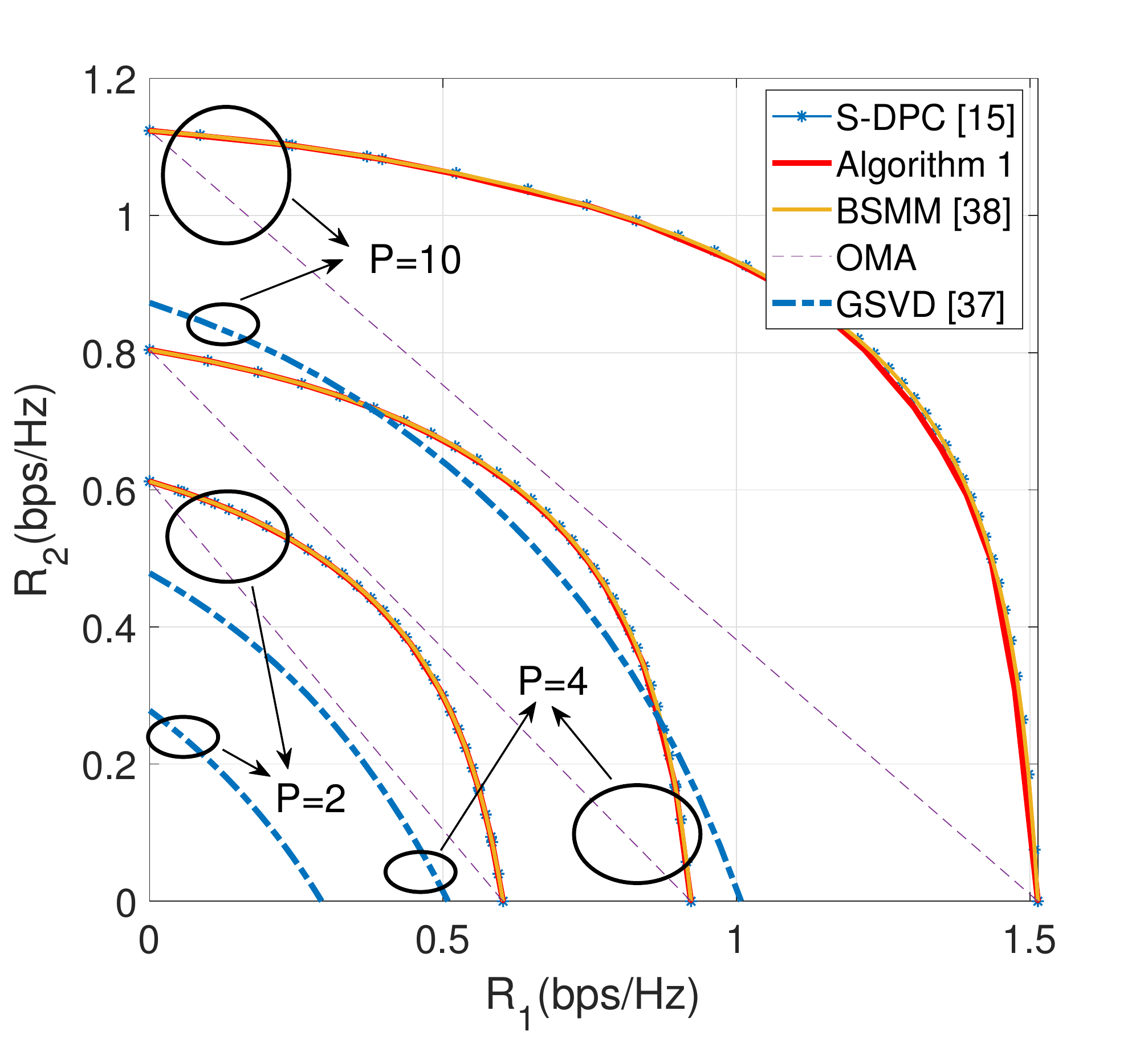}
				\label{fig:levelnoma3}
	}
	\caption{Secrecy rate regions of MIMO-NOMA  without multicasting services with different security requirements ($n_t=3$, $n_1=2$, $n_2=1$ and $P=2, 4, 10$). The blue dot line denotes the achievable or secrecy capacity region realized by DPC or S-DPC, the red line and yellow line are achieved by Algorithm~\ref{alg:algorithmPS} and  Algorithm~\ref{alg:WSRalgorithm}, respectively. The dash purple line is OMA reached by the
		time-sharing between the two extreme points \cite{qi2020secure}.}
	\label{fig:3levelsnocommon}
\end{figure*}

\section{Numerical results}\label{sec:V}
In this section, we perform numerical results to illustrate the achievable 
secrecy rate region of the three scenarios under the average power constraint and then verify Algorithm~\ref{alg:algorithmPS} and  Algorithm~\ref{alg:WSRalgorithm}.

\subsection{Secrecy Rate Regions for Three Scenarios}
First, we verify the transmission rates for all of the  scenarios.
In this simulation, the channels for 
	user~$1$ and user~$2$ are chosen to be
\begin{align}
\mathbf{H}_1 =\left[
\begin{matrix} 0.3861   & 0.6355\\
 0.9995  & 0.6259
\end{matrix}\right], \notag \;
\mathbf{H}_2 =\left[
\begin{matrix} 0.4977  &   0.9658\\
 0.9245   & 0.6116
\end{matrix}\right], \notag
\end{align}
where the channel coefficients	are generated independently according to Gaussian distribution,
and the total power is $10$. The search steps for 
$\alpha_1$ in Algorithm~\ref{alg:algorithmPS} is $0.05$. Fig.~\ref{fig:3levels} depicts the 
secrecy rate regions of the three scenarios. 
The PS scheme is compared with TDMA based scheme which is realized 
by transmitting messages in three orthogonal time slots with equal length. 
Also, the upper bounds are achieved by DPC \cite{ekrem2010gaussian,geng2014capacity} for Scenario~A, capacity rate regions \cite{goldfeld2019mimo}  and \cite{Hung2013Liu} for Scenario~B and C, respectively, which are realized by exhaustive search over all possible covariance matrices. It is 
shown 
that the proposed precoding and power allocation method significantly 
outperforms the 
TDMA strategy, and it is close to that of the capacity rate regions.  The projection of the secrecy capacity region  onto the
 $(R_1, R_2)$ or $(R_0, R_k)$, $k=1,2$, plane is the capacity region with two secrecy messages or only one 
secrecy message, which is going to appear in the next subsections.

It is worth mentioning that in Scenario A, given a set of power allocation parameters, we can analytically obtain the rate triplets, i.e., SVD and WF in \textit{Step~2a} and \textit{Step~3a}.  The complexity of the algorithm for finding one point on the region only comes from matrix operations, and  no search is needed. In \cite[Section III]{weingarten2006capacity} where each user is equipped with one antenna,  the rate maximization optimization is transferred to the power minimization problem, and thus a linear semi-definite convex optimization is obtained, but it needs a binomial search of one parameter and then apply one numerical method using standard semi-definite programming methods, e.g., \texttt{CVX} \cite{grant2009cvx}.

\subsection{Secrecy Rate Regions without Common Messages}
Consider the MIMO-NOMA case without common messages, the achievable rate region is realized by Algorithm~\ref{alg:algorithmPS} with $M_0 
= \emptyset$ and $\alpha_0=0$. Also, WSR with BSMM in Algorithm~\ref{alg:WSRalgorithm} is compared. The capacity regions   
 are achieved by the parameters including the search step  $0.01$, 
the total power $P=2, 4, 10$, respectively, and the channels denote as
\begin{align}
\mathbf{H}_1 &=\left[
\begin{matrix} 
 0.1560  & -0.6372  & -0.4055 \\
-1.1450  & -0.1417  &  0.0708
\end{matrix}\right], \notag\\
\mathbf{H}_2 &=\left[
\begin{matrix} -1.5032  &  0.5503 &  -0.0334
\end{matrix}\right]. \notag
\end{align}

Figure~\ref{fig:3levelsnocommon} compares the rate regions of the proposed 
power splitting scheme with the capacity region achieved by DPC for Scenario~A \cite{weingarten2004capacity,weingartens2006capacity}  generated using the iterative
	algorithm with MAC-BC duality presented \cite{viswanathan2003downlink},  and Scenario B \cite{goldfeld2019mimo}, respectively, and S-DPC \cite{liu2010multiple} for Scenario C. In general, the proposed algorithms can outperform OMA and achieve capacity.

In Scenario A, we consider another case when the numbers of receivers' antennas are limited to be the same, i.e., $n_1=n_2$. The channels are
	\begin{align}
	\mathbf{H}_1 =\left[
	\begin{matrix}  -1.3784  &  0.2593  & -0.2040\\
	-1.0689 &  -2.4811  & -1.2978
	\end{matrix}\right], \notag \\
	\mathbf{H}_2 =\left[
	\begin{matrix}  -0.3403  &  0.1358  & -1.9706\\
	-2.2982 &   -1.8135  &   0.2904
	\end{matrix}\right], \notag
	\end{align}
and $P=10$. From Fig.~\ref{fig:gsvd1}, the proposed algorithms can achieve a larger rate region than GSVD \cite{chen2019asymptotic} and OMA.
In Scenario C, the GSVD 
\cite{fakoorian2013optimality} and BSMM proposed by \cite{park2015weighted} are compared. The secrecy capacity region, i.e., DPC-based  rate
region, is obtained by \cite{vishwanath2003duality}. The proposed method can reach the secrecy rate region. Besides, the PS method in Algorithm~\ref{alg:algorithmPS} is  more  general for all  scenarios, while the WSR method in Algorithm~\ref{alg:WSRalgorithm} is specifically for the case without any broadcasting requirements.

\begin{figure}[htbp]
	\centering
	\begin{minipage}[t]{0.49\textwidth}
		\centering
		\includegraphics[scale=.46]{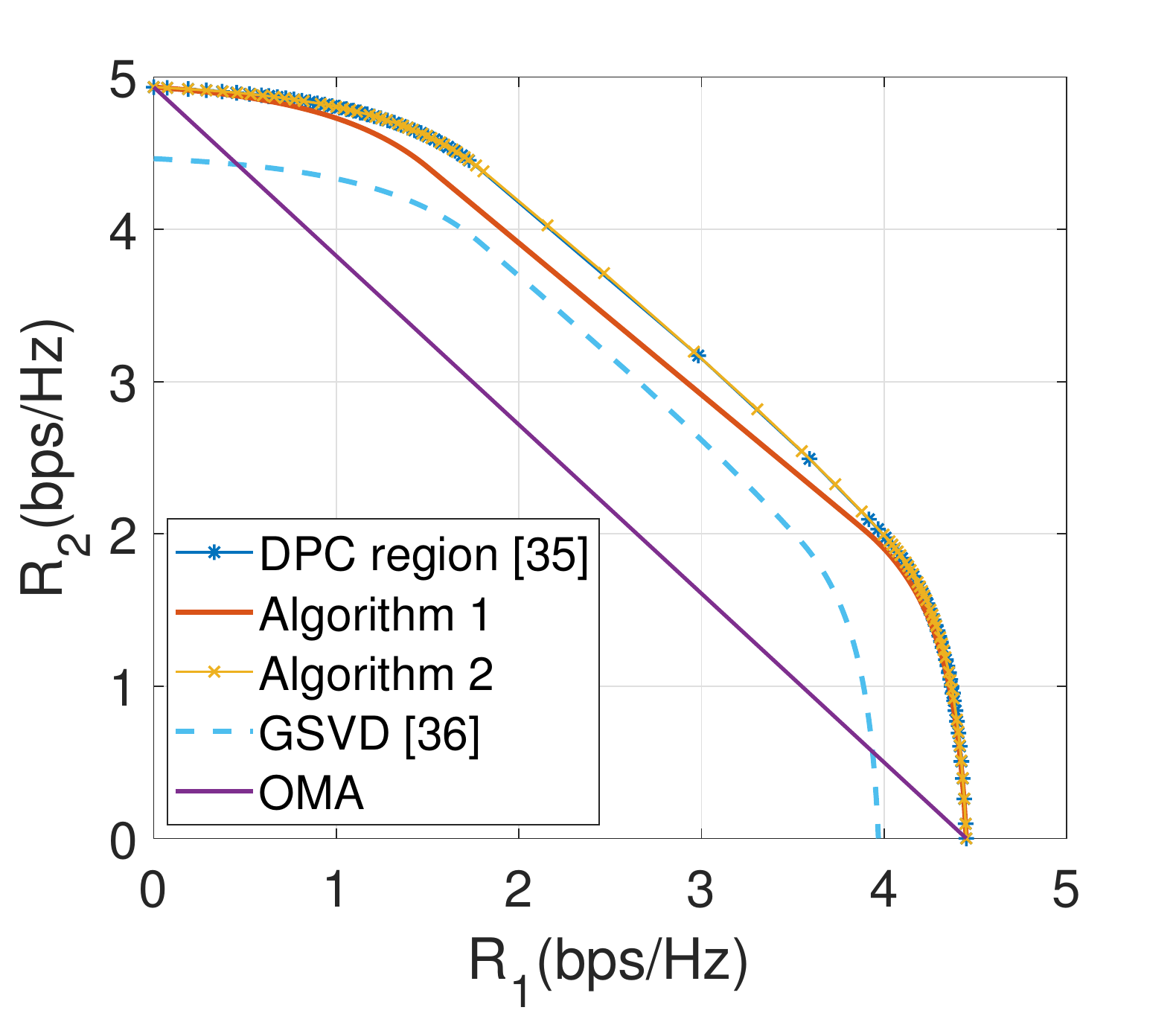}
		\caption{Comparison of the rate regions of Scenario A DPC \cite{weingarten2004capacity, weingartens2006capacity,viswanathan2003downlink}, GSVD \cite{chen2019asymptotic}, the proposed schemes, and OMA for $P=10$, and $n_t=3, 
	n_1=n_2=2$.}
	\label{fig:gsvd1}
	\end{minipage} 
	\begin{minipage}[t]{0.49\textwidth}
		\centering
		\includegraphics[scale=.46]{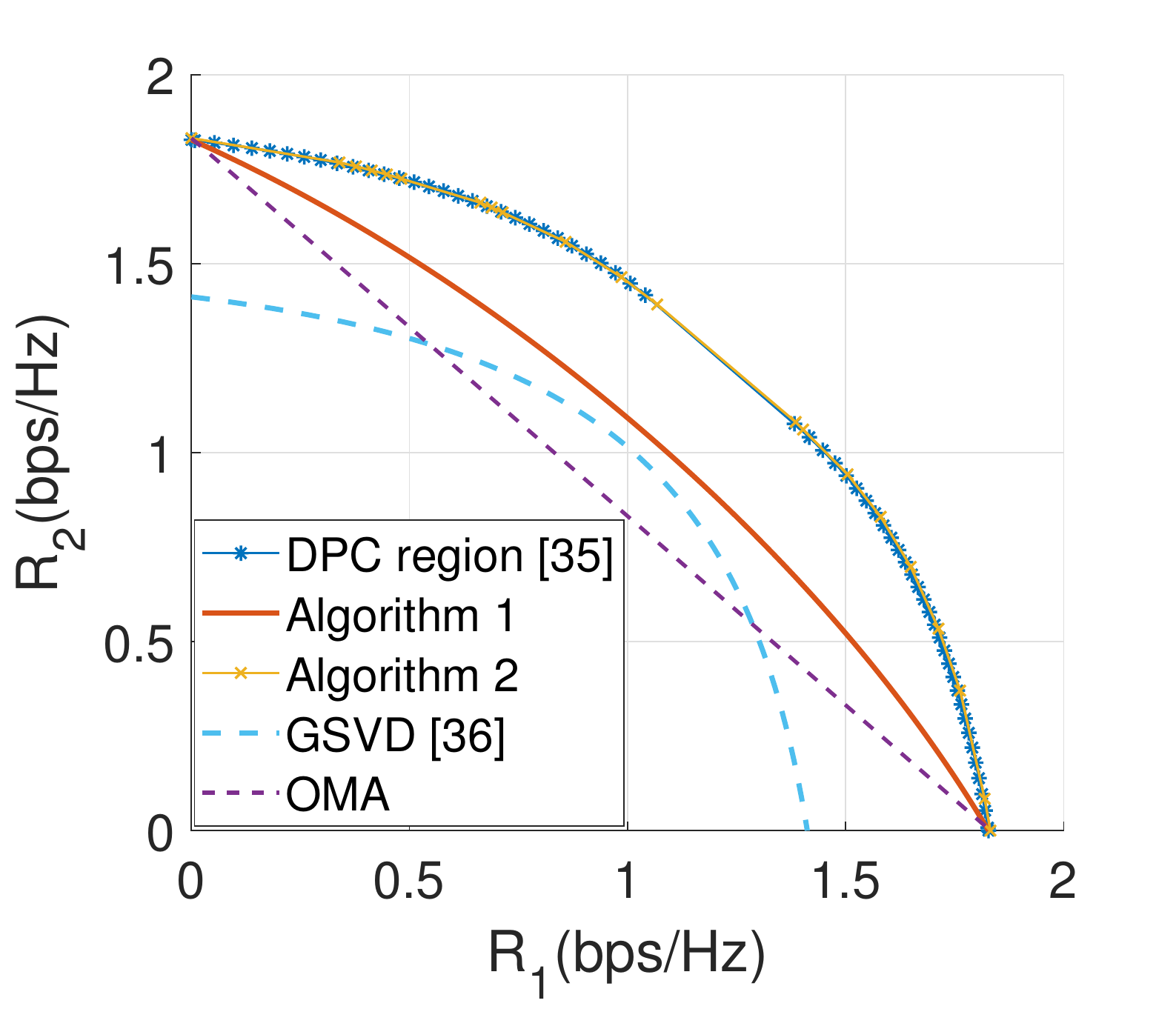}
	\caption{Comparison of the rate regions of Scenario~A DPC \cite{weingarten2004capacity,weingartens2006capacity,viswanathan2003downlink}, GSVD \cite{chen2019asymptotic}, the proposed schemes, and OMA for $P=10$, and $n_t=2, n_1=n_2=1$ in \cite{vishwanath2003duality}.}
	\label{fig:dpc1}
	\end{minipage}
\end{figure}


We provide one case with the same settings in \cite[Fig. 3]{vishwanath2003duality}, where the channels are:
	\begin{align}
		\mathbf{h}_1 =\left[
		\begin{matrix}  1  &  0.4 
		\end{matrix}\right],  \notag \;
		\mathbf{h}_2 =\left[
		\begin{matrix}  0.4 & 1
		\end{matrix}\right], \notag
	\end{align}
	and $P=10$. The results are shown in Fig.~\ref{fig:dpc1}. The iteration tolerance $t$ in \cite{viswanathan2003downlink} is set as $10^{-3}$, and a bisection search is applied to find the optimal $t$. We set our iteration accuracy  $\epsilon_2$ and convergence tolerance  $\epsilon_3$ in Algorithm 2 as $10^{-3}$. The complexity is the same because both methods require  finding the covariance matrices iteratively. The tolerance in \cite{viswanathan2003downlink} and the Lagrange multiplier in Algorithm 2 are both optimized through bisection search. Algorithm 1 is very fast without any search for one power allocation factor but is sub-optimal.

\subsection{Multicast and One Confidential Messages}
If we  set $\alpha_2= 0$ in Scenario C, then the general problem is reduced to the integrated services with one confidential and one common messages\footnote{One can also set $\alpha_1=0$ and change 
	the order of channels for $(R_0, R_{2c})$ which finally will resort to the same 
	results due to 
	duality.}, i.e., $(R_0, R_{1c})$. As  shown in Fig.~\ref{fig:r0r2}, the 
proposed method substantially outperforms the GSVD-based
orthogonal subchannel precoding method in \cite{weidong2016GSVD}, in which the turning point is a switch of subchannel selection schemes. Compared with the GSVD-based orthogonal subchannel decomposition, Algorithm~\ref{alg:algorithmPS} can make a better use of channel without decomposing the channel into many orthogonal subchannels.  Also, our method is very 
close to the secrecy capacity obtained by
rotation-based random exhaustive search
\cite{vaezi2019rotation}. In this simulation, search step for $\alpha_1$ is $0.05$, $P=15$, and 
channels are
\begin{align}
\mathbf{H}_1&=\left[
\begin{matrix}
0.0653   & 0.0185   & 1.0397\\
-0.1762   &-1.5297  &  0.1460\\
0.9822   &-1.9882   &-0.1263\\
0.9421  & -0.1771   & 0.3746
\end{matrix}\right], \notag\\
\mathbf{H}_2&=\left[
\begin{matrix}
-0.0248  &  1.3016 &   0.4677\\
0.0523  & -0.1297    &0.4269\\
0.6795  & -1.1725  & -0.8358
\end{matrix}\right]. \notag
\end{align}
\begin{figure}[t]
	\centering
	\includegraphics[width=0.4\textwidth]{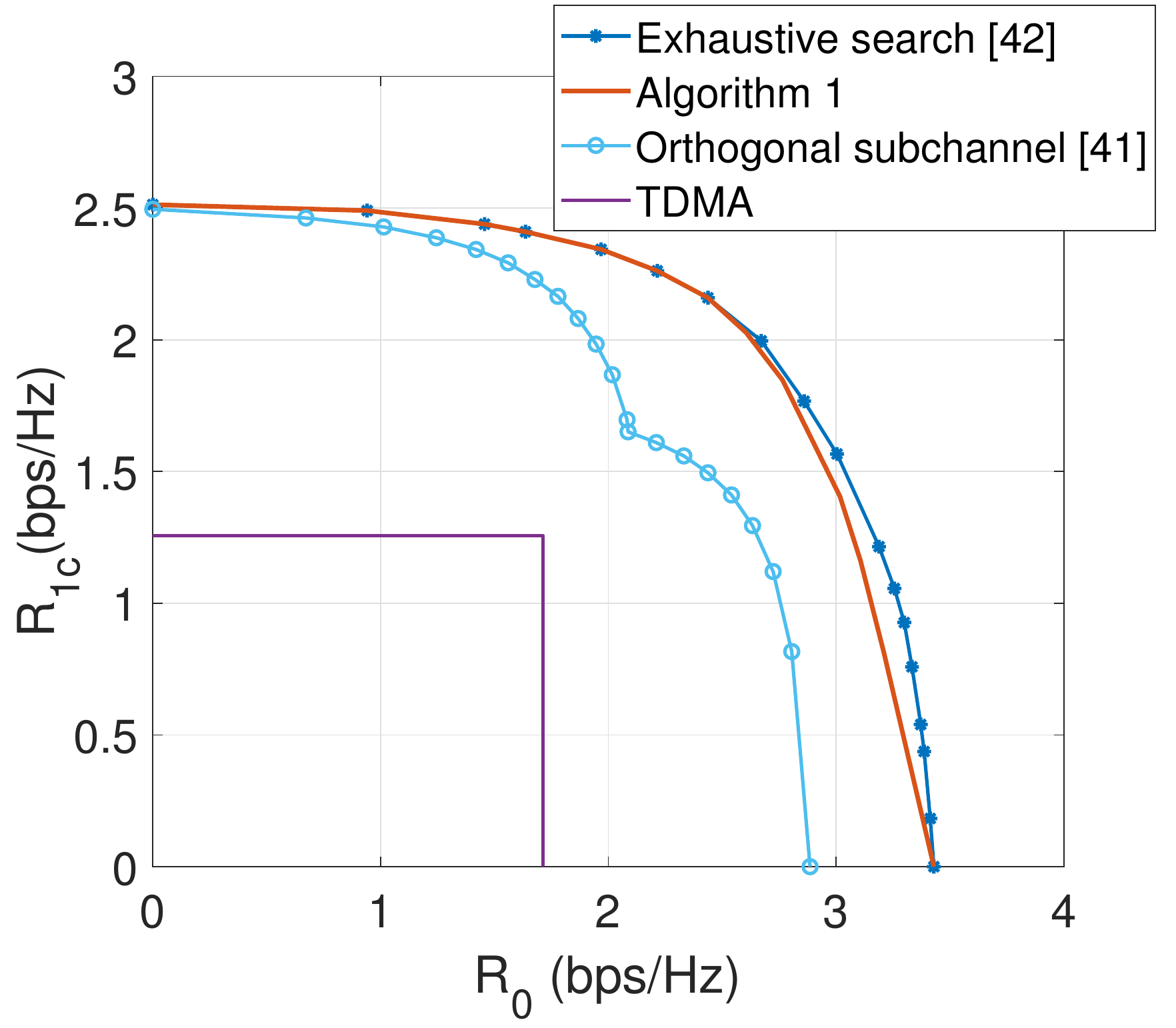}
	\caption{Comparison of the achievable rate regions of rotation-based exhaustive  search \cite{vaezi2019rotation}, orthogonal subchannel precoding \cite{weidong2016GSVD}, the proposed scheme, and TDMA for $P=15$, $n_t=3, n_1=4$, and $n_2=3$.}
	\label{fig:r0r2}
\end{figure}

\begin{table*}[!tb]
	\centering
	\caption{Comparison among different precoding schemes for the MIMO-NOMA with different  communication scenarios.}
	\label{summation}
	\begin{tabular}{|c|c|c|c|c|c|}
		\rowcolor[HTML]{EFEFEF}
		\hline
		& {\textbf{DPC/S-DPC}}              & \begin{tabular}[c]{@{}c@{}}\textbf{WSR with BSMM}\\ (proposed for Scenario~A, B\\without common message) \end{tabular}                   & \begin{tabular}[c]{@{}c@{}}\textbf{PS}\\ (proposed for all scenarios)\end{tabular}         & \textbf{GSVD}     & \textbf{OMA}       \\ \specialrule{.1em}{.05em}{.05em} 
		\textbf{Performance} & optimal & \begin{tabular}[c]{@{}c@{}}suboptimal\\ (but close to optimal)\end{tabular} & \begin{tabular}[c]{@{}c@{}}suboptimal\\ (but close to optimal)\end{tabular} & suboptimal & \begin{tabular}[c]{@{}c@{}}{highly} \\ suboptimal \end{tabular} \\ \hline
		\textbf{Speed}  &  generally slow            & \begin{tabular}[c]{@{}c@{}}acceptable for a small $m$ \\ {($m=\max(n_t, n_1, n_2)$)}\end{tabular}          &  fast for a small $n_t$      &  fast      &  very fast      \\ \hline
		\textbf{Complexity}  &        generally high           &    acceptable                &    acceptable        &      low       &      low      \\ \hline
		\textbf{Generality} &       	 \checkmark           &    \begin{tabular}[c]{@{}c@{}} not easy to generalize\\   for common message\end{tabular}           &   \checkmark          &      \checkmark        &      \checkmark        \\ \hline
	\end{tabular}
\end{table*}

 We notice that GSVD has been applied to many subcases. Examples are  two private messages in  Scenario A \cite{chen2019asymptotic}, two confidential messages in  Scenario C \cite{fakoorian2013optimality}, and one confidential message and one common message \cite{mei2016secrecy}. Thus, it also has the potential to become an efficient and general tool for all  scenarios. But, it should be noted that the performance of GSVD is affected by the number of antennas at the transmitter and users \cite{chen2019asymptotic, zhang2020rotation}. Algorithm~\ref{alg:WSRalgorithm} outperforms GSVD and sometimes Algorithm~\ref{alg:algorithmPS}, but it is not easy to extend it to common messages. Algorithm~\ref{alg:algorithmPS} balances the two methods.
We summarize the benefits and properties of the precoding schemes in Table~\ref{summation}.
Specifically, three linear precoding families in the MIMO-NOMA with different secrecy requirements are: 
\begin{itemize}
		\item GSVD: is the fastest general tool but has poor performance in some antenna settings. 
\item PS:  Algorithm~\ref{alg:algorithmPS} is a general suboptimal tool. It balances time and performance.
\item WSR:  Algorithm~\ref{alg:WSRalgorithm} is locally	 optimal with KKT as the optimal necessary conditions. It has relatively high time complexity. 
\end{itemize}

\section{Conclusions}\label{sec:conclusion}

In this paper, we investigate a two-user MIMO-NOMA network with different  security requirements. 
Specifically, three  scenarios are differentiated according to the required services:  multicast,  private, and/or confidential services. Although the capacity rate regions are known from information theory, the precoding and power allocation are still unclear for all three scenarios in general. A PS scheme is proposed which decomposes the MIMO BC into the P2P MIMO, the wiretap, and the multicasting channels. Then, existing solutions can be applied to obtain the precoding and power allocation matrices. The proposed PS can achieve near-capacity rate regions which are significantly higher compared to
the existing orthogonal methods. 
On the other hand, in the case of the MIMO-NOMA networks without multicasting, a WSR maximization based on BSMM is formulated for all three  scenarios. We generalize and prove that the duality holds for the WSR maximization, and the KKT conditions are necessary for the optimality. 
Numerical results demonstrate the performance of the WSR maximization. The two methods have their advantages. PS is a general tool for the MIMO-NOMA with different  scenarios of security, while WSR maximization provides a great potential for the secure MIMO-NOMA without multicasting, and both methods are computationally efficient compared with the DPC or S-DPC.

\appendices
\section{Proof of Lemma~\ref{lemma1}}
The duality gap is zero in Scenario A because the problem can be transferred as a convex problem satisfying Slater's condition  \cite{boyd2004convex}. 
Scenario C has zero duality gap and satisfies Lemma~\ref{lemma1} \cite[Theorem 1]{park2015weighted}. Now, we only need to prove that  Scenario B also has zero duality gap.

The \textit{time-sharing property} holds if the optimal value of the problem $\varphi(P)$ in \eqref{reformuP1} is concave of total power $P$, which implies a zero duality gap, i.e., the primal problem $\varphi(P)$
and the dual problem in \eqref{dualp} have the same optimal value \cite[Theorem 1]{yu2006dual}. The problem $\varphi(P)$  meets the time-sharing property because the secrecy capacity for Scenario~B without common message under total power constraint is a union of all the possible rates under covariance constraint \cite[Corollary 1]{goldfeld2019mimo}. The convex hull operation such as the time-sharing scheme will not enlarge the capacity region. 
Therefore,  the problem $\varphi(P)$ has zero duality gap and KKT conditions are necessary \cite{yu2006dual}.

\section*{Acknowledgment}
The authors are  grateful to the  reviewers for their suggestions to improve the quality of the paper. 

\bibliography{proposal1107}
\bibliographystyle{ieeetr}

\end{document}